\def\1{\bm{1}}
\newcommand{\ve}{\@ifnextchar\bgroup{\velong}{{\bm{e}}}}
\newcommand{\velong}[1]{{\bm{#1}}}
\DeclareMathAlphabet{\mathsfit}{\encodingdefault}{\sfdefault}{m}{sl}
\SetMathAlphabet{\mathsfit}{bold}{\encodingdefault}{\sfdefault}{bx}{n}
\newcommand{\eat}[1]{}
\newtheorem{assumption}{Assumption}
\begin{document}
\title{A Unified Framework of Multi-Stage Multi-Winner Voting: An Axiomatic Exploration}
\titlerunning{Axioms Study for Multi-Stage Multi-Winner Voting}
%
\author{Shengjie Gong\inst{1} \and
Lingxiao Huang\inst{2} \and
Shuangping Huang\inst{1} \and
Yuyi Wang\inst{3} \and
Zhiqi Wang\inst{4} \and
Tao Xiao\inst{5} \and
Xiang Yan\inst{5} \and
Chunxue Yang\inst{5}
}
\authorrunning{Gong et al.}
%
\institute{South China University of Technology \and
Nanjing University \and
CRRC Zhuzhou Institute \and
Shanghai University of Finance and Economics \and
Huawei Taylor Lab}
\maketitle              
\begin{abstract}
Multi-winner voting plays a crucial role in selecting representative committees based on voter preferences. 
Previous research has predominantly focused on single-stage voting rules, which are susceptible to manipulation during preference collection. 
In order to mitigate manipulation and increase the cost associated with it, we propose the introduction of multiple stages in the voting procedure, leading to the development of a unified framework of multi-stage multi-winner voting rules. 
To shed light on this framework of voting methods, we conduct an axiomatic study, establishing provable conditions for achieving desired axioms within our model. 
Our theoretical findings can serve as a guide for the selection of appropriate multi-stage multi-winner voting rules.
\keywords{Multi-stage voting \and Multi-winner voting \and Axiomatic exploration.}
\end{abstract}
\section{Introduction}
\label{sec:intro}

The problem of multi-winner voting is to select a winning committee of size $k$ from $m$ candidates by $n$ voters with individual preferences of candidates, which has various applications in political, social, or business settings~\cite{Skowron16:Finding,lu2011budgeted}, e.g., parliamentary elections, the selection of awards judging committee, and movie selection on the front page of Netflix. 
Due to the importance, there is a large body of study for multi-winner voting, including axiomatic study~\cite{elkind2017properties}, computational complexity~\cite{fishburn2004approval,Procaccia08:Complexity}, and approximate optimal committee selection~\cite{lu2011budgeted,Skowron15:Achieving}.

Given the wide-ranging applications of multi-winner voting, there has been a growing concern regarding potential manipulations aiming at influencing the election results in favor of certain individuals through the misrepresentation of their preferences. 
A promising approach to address this concern is the adoption of multi-stage voting protocols, which have proven effective in enhancing computational resistance to manipulation~\cite{bartholdi1991single,davies2012eliminating,narodytska2013manipulating}. 
However, there are more desired properties, known as axioms in literature, besides manipulation-proofness when choosing voting rules for specific scenarios.
Intuitively, if a single-stage voting rule satisfies an axiom, one may wonder whether the corresponding multi-stage voting rule also does.

In this paper, we develop the axiomatic study for the multi-stage multi-winner voting.
It is the initial phase of a theoretical investigation into multi-stage multi-winner voting, aiming to enhance our understanding of its design principles. 
Notably, we place less emphasis on the computational aspects of multi-stage voting in this study, as the outcomes may either yield negative results as extensions of existing NP-hard proofs or positive results that aggregate established single-stage rules. 
Instead, we focus on determining whether and under what conditions the specific desired axioms or properties can be satisfied by multi-stage multi-winner voting mechanisms, which holds significant importance for the practical selection of voting rules.

\subsection{Technique Contributions}
\label{sec:contribution}

We first introduce a unified framework of multi-stage multi-winner voting rules, denoted as $\mathcal{R}=(R_1,\ldots,R_t)$ ($t\geq 1$), where different multi-winner voting rules $R_r$ can be employed in distinct stages $r\in [t]$, and the winning committee of the $r$-th stage serves as the candidate pool for the $(r+1)$-th stage (Definition~\ref{def: mrmw}). 
Our approach allows for the capture of classic multi-winner voting rules, most importantly the score-based rules (Definition~\ref{def: abgrule}), which generalizes existing multi-stage voting rules.

We then establish a set of desired axioms for multi-winner voting and provide sufficient conditions for the satisfaction or violation of these axioms within our model. 
Specifically, we demonstrate that multi-stage score-based voting maintains solid coalition (Section~\ref{sec:score_preserve}) but breaches committee monotonicity, candidate monotonicity, and consistency (Section~\ref{sec: violating score}). 
These findings offer valuable insights for selecting appropriate rules in multi-stage voting, recommending the inclusion of rules that satisfy solid coalitions, such as SNTV, in each stage (Section~\ref{sec:discussion}).

\subsection{Other Related Work} 
\label{sec:related}


A large body of literature studies axioms of multi-winner voting.
The closest work to ours is~\cite{elkind2017properties}, which considered multiple axioms including 
consistency, 
adapted from the single-winner setting; committee monotonicity
\cite{barbera2008choose}, which ensures that the selected committee can be extended without removing anyone from it when increasing the target committee size; solid coalitions, 
which are weaker than but reflect the same idea as Dummett’s proportionality \cite{dummett1984voting} or the Droop proportionality criterion \cite{woodall1994properties}.  
Other axioms that are theoretically important but not discussed in~\cite{elkind2017properties} include Pareto efficiency~\cite{pareto1919manuale} and justified representation~\cite{Faliszewski2017multiwinnerVA}.
Some axiomatic studies of multi-stage single-winner voting have been done in \cite{narodytska2013manipulating}.
In this paper, we study the above axioms in multi-stage settings and extend to the multi-winner case.
%

The design of multi-winner voting rules has also attracted a lot of attention for decades.
Elkind et al.~\cite{elkind2017properties} suggested two natural ways to identify the similar internal structure between many of the known multi-winner rules: best-$k$ rules such as SNTV and $k$-Borda, and committee scoring rules such as Bloc and CC, which is a subclass of the score-based voting rules defined in Section \ref{sec:rule}. 
%
They picked ten single-round voting rules as examples of different ideas on (score-based) multi-winner elections, and studied whether these rules satisfy the axioms aforementioned. 
There are some commonly used rules that have not been discussed in~\cite{elkind2017properties}, such as  
approval-based rules~\cite{aziz2017justified} and Condorcet committee-related rules~\cite{Faliszewski2017multiwinnerVA}.
In this work, we extend their analysis for score-based rules to the multi-stage case and add a discussion of approval-based rules.
%



The axiomatic study for multi-stage voting is started by Smith~\cite{smith1973aggregation}, who considers rules of successive candidate elimination in a single-winner setting and showed that all scoring runoff rules fail monotonicity. 
Narodytska and Walsh~\cite{narodytska2013manipulating} studied two-stage single-winner voting rules, corresponding axioms, and their computing complexity. 
They showed that a two-stage voting rule offers advantages by inheriting appealing properties from the two stages. 
For instance, the Black process inherits the Condorcet consistency from the first stage and the properties of monotonicity, participation, and relevance to Condorcet losers from the second stage. 
On the other hand, the vulnerability to manipulation and control can be seen as an undesirable characteristic of two-stage voting rules. 
Davies et al.~\cite{davies2012eliminating} considered a model similar to ours except that they use scoring vectors and only allow a single winner. 
Their work shows that the process of sequential elimination of candidates is often considered a means to make manipulation computationally challenging. 
Borodin et al.~\cite{borodin2019primarily} focused on the primary elections within political parties, followed by a general referendum. 
Their paper points out that, in the real world, electoral and decision-making processes are often more complex, involving multiple stages. 
In this paper, we propose a unified framework and provide a systematic analysis for the axioms in multi-stage voting.



\section{Unified Framework for Multi-Stage Multi-Winner Voting}
\label{sec:model}

In this section, we establish the framework for multi-stage multi-winner voting.
We denote an election by $E = (C, V)$, which consists of a set $C$ of $m$ candidates and a group $V$ of $n$ voters. 
Before the discussion on multi-stage rules, we revisit the definition of single-stage multi-winner voting rules.
\begin{definition}[\bf Multi-Winner Voting Rules]
\label{def:mult-winner}
A multi-winner voting rule, denoted as $R$, is a function that, given an election $E = (C,V)$ and a positive integer $k$ $(1\leq k< m)$, outputs a collection $R(E,k)$ of possible winning committees $S\subseteq C$ of size $k$.
\end{definition}
%
%
We now introduce a concept of multi-stage multi-winner voting rules, wherein each stage employs a multi-winner voting rule (which may vary across stages), and the winning committee from each stage serves as the candidate set for the subsequent stage.
\begin{definition}[\bf $t$-Stage Multi-Winner Voting Rules]
\label{def: mrmw}
%
A $t$-stage multi-winner voting rule, denoted as $\mathcal{R} = (R_1, R_2, \ldots, R_t)$, is defined as a function that, given an election $E=(C,V)$ and an integer vector $\vec{v} = (k_1, k_2, \ldots, k_t)$ where $m > k_1 > k_2 >...> k_t$, outputs a collection $\mathcal{R}(E,\vec{v})$ of possible winning committees $S\subseteq C$ of size $k_t$. 
This output is contingent on the existence of a sequence $(S_0,S_1,\ldots,S_t)$ that satisfies the following conditions: (\romannumeral1) $S_0 = C$ and $S_t = S$; and (\romannumeral2) $S_r\in R_r(E_{r},k_r)$ for each $r\in[t]$, where $E_{r} = (S_{r-1}, V)$.
\end{definition}
Notice that a multi-stage multi-winner voting rule can be considered a special case within the scope of Definition~\ref{def:mult-winner}.
This implies that multi-stage voting represents one among various methodologies for determining the final winning committees.
The sequence $(S_0, S_1, \ldots, S_t)$ in a $t$-stage voting rule may be interpreted as a trajectory of successive shortlists, that certificates $S_t$ as the ultimate winning committee.
%
%
%
%
%
%

A multi-winner voting rule determines its output according to the voters' preferences for candidates.
We assume that a voter $v$’s preference is represented by a ranking of the candidates.
%
For each candidate $c$, we define $p_v(c) = l$ to indicate that candidate $c$ is ranked as the $l$-th favorite candidate by voter $v$.
Thus $(p_v(c_1), ...,p_v(c_m))$ forms a permutation of $[m]$\footnote{In this paper, we assume that there is no tie in the order of preferences.}.
%
For multi-stage voting rules, we need to establish the ranking of voters specifically on the candidate set $S_{r-1}$ during the $r$-th stage instead of the complete set $C$.
%
%
\begin{assumption}[\bf Rankings are preserved within any subset]
    \label{assumption:ranking_ordeer}
    For any subset $S\subseteq C$, let $p^S_v$ represent the ranking of $v$ on $S$. 
    We assume that for any pair of candidates $c, c'\in S$, $p^S_v(c) < p^S_v(c')$ holds if and only if $p_v(c) < p_v(c')$.
\end{assumption}

\subsection{Score-Based Rules}
\label{sec:rule}
We introduce a unified framework of multi-stage multi-winner voting rules, known as \emph{score-based rules} (Definition~\ref{def: abgrule}). 
It encompasses a wide range of important multi-winner voting procedures that has been well studied in the literature (e.g.,~\cite{chamberlin1983representative,elkind2017properties}).
We begin by defining single-stage score-based rules before extending them to the multi-stage context.
%
%
%

Two parameters characterize a single-stage score-based rule: first, how a voter assigns scores to individual candidates; and second, how a voter evaluates a committee comprised of candidates that she supports in different degree.
Score-based rules operate under the assumption that a voter $v$ assigns a score $\gamma^{m,k}(p_v(c))$ to each individual candidate $c$, and this score is non-increasing based on $c$'s position $p_v(c)$ in $v$'s preference list. 
The score $\gamma^{m,k}(\cdot)$ may depend on the size $m$ of the candidate set and the size $k$ of the winning committees. This is because a voter's preference for candidates may vary with changes in $m$ and $k$, influencing the determination of position scores.
As an illustration, consider the approval score, wherein a uniform score is assigned to the candidates ranking within the top $k$ positions. 
Another example is the Borda score, which assigns a score of $m-p$ to the candidate occupying the $p$-th position in a voter's preference list. 
To prepare for implementing a position score function in multi-stage voting, where the sizes of the candidate pool and the target committee vary in different stages, it is imperative to precisely define the score function for distinct cases of $(m,k)$.

We allow two distinct forms for a voter's assigned score to a committee: it can be either the cumulative sum of scores assigned to individual candidates within the committee, or the score specifically designated for the voter's preferred candidate within the same committee.
Let us consider the committee's score as a norm $\beta$ applied to the score vector assigned to all candidates in the committee. When $\beta=\ell_1$, each candidate $c$ in a committee $S$ contributes a utility of $\gamma^{m,k}(p_v(c))$ to voter $v$.
These rules with $\beta=\ell_1$ are also referred to as \emph{weakly separable committee scoring rules} in \cite{elkind2017properties}. 
In the case of $\beta = \ell_{\max}$, a voter $v$ evaluates a committee $S$ based solely on one representative candidate within $S$. 
With the parameters $\beta$ and $\vec{\gamma}$, we formally define score-based rules.
%
%
%
%
%
\begin{definition}[\textbf{Score-based rules; $(\beta, \vec{\gamma})$-rule}]
\label{def: abgrule}
A score-based rule is parameterized by a norm $\beta\in\{\ell_1, \ell_{\max}\}$ and an infinite-dimensional vector function 
$$\vec{\gamma}=\left(\gamma^{2,1},\gamma^{3,1},\gamma^{3,2},...,\gamma^{m,k},...\right),$$
where $\gamma^{m,k}:[m]\rightarrow \mathbb{R}$ is a non-increasing position score function.
%
Given an election $E = (C, V)$ and a target committee size $k$, the score $f_v(S)$ of a committee $S$ given by a voter $v\in V$ is defined as
$$f_v(S)=\left\{
\begin{aligned}
 \sum_{c\in S}\gamma^{m,k}(p_v(c)) & \quad \text{if}\ \beta = \ell_{1};\\
 \max_{c\in S}\gamma^{m,k}(p_v(c)) & \quad \text{if}\ \beta = \ell_{\max}.
\end{aligned}
\right.$$
The $(\beta,\vec{\gamma})$-rule voting returns the committees with maximum scores of $\sum_{v\in V}f_v(S)$ over all possible $S$ of size $k$.
\end{definition}
\paragraph{Examples of score-based rules.}
The class of score-based rules encompasses multiple commonly-used multi-winner voting rules; summarized as follows.
\begin{itemize}
\item Single Non-Transferable Vote (SNTV): the $(\ell_1, \textsc{Plu})$-rule, where $\textsc{Plu}$ is the plurality score $\textsc{Plu}^{m, k}(p) = \mathbb{I}_{\{p = 1\}}$.
\item Bloc: the $(\ell_1, \textsc{App})$-rule, where $\textsc{App}$ represents the approval score $\textsc{App}^{m, k}(p) = \mathbb{I}_{\{p \leq k\}}$.
\item Borda: the $(\ell_1, \textsc{Borda})$-rule, where $\textsc{Borda}$ is the Borda score $\textsc{Borda}^{m, k}(p) = m - p$.
\item Chamberlin-Courant (CC): the $(\ell_{\max}, \textsc{Borda})$-rule.
\end{itemize}
%
We can extend score-based rules introduced in Definition~\ref{def: abgrule} to the multi-stage case (Definition~\ref{def: mrmw}). Specifically, we define $\mathcal{R}$ as a multi-stage $(\beta, \vec{\gamma})$-rule if it employs the $(\beta, \vec{\gamma})$-rule in each stage of the voting process.
This framework for multi-stage multi-winner voting consists of a broad spectrum of previously defined voting procedures.
For instance, \emph{Single Transferable Vote (STV)} can be regarded as a $(m-1)$-stage rule with a vector $\vec{v} = (m-1, m-2, \dots, 1)$, where each stage employs a $(\ell_1, \textsc{Plu})$-rule. In practical terms, this represents that STV eliminates one candidate with the minimum plurality score at each stage.
Another example within this framework is \emph{Baldwin's rule} (refer to~\cite{narodytska2013manipulating}). Baldwin's rule is also structured as a $(m-1)$-stage rule with $\vec{v} = (m-1, m-2, \dots, 1)$, but each stage utilizes a $(\ell_1, \textsc{Borda})$-rule. 

Notice that Definition~\ref{def: mrmw} enables the utilization of different position score functions $\vec{\gamma}$ across various stages. However, for the sake of simplicity, we do not delve into the exploration of such general multi-stage voting rules within the scope of this paper. The investigation of employing distinct $\vec{\gamma}$-scores in different stages is an intriguing direction for future research.

    %
    %
    %
    %
    
    %
    %

\subsection{Axioms Related to Score-Based Rules}
\label{sec:axiom_score}

Next, we present some axioms that are desirable for score-based rules. 
The first axiom is \emph{Solid Coalition} representing an implementation of proportionality, which is a notion that requires a voting rule to select a committee representing as precisely as possible the opinions of the society. A typical example of scenarios where proportionality is attached importance is parliamentary elections~\cite{sanchez2017monotonicity}.
\begin{definition}[\bf Solid Coalition]
\label{def:solid_coalition}
A score-based rule $R$ satisfies Solid Coalition iff $\forall E = (C, V)$ and $k\in [m]$, if at least $\frac{n}{k}$ voters rank some candidate $c$ first then $c$ belongs to every committee in $R(E,k)$.
\end{definition}
\eat{
\emph{Pareto efficiency}. It requires that a dominated committee must never be output.
When the goal of a multi-winner rule is to select the “best” committee, Pareto efficiency is often considered to be a minimal requirement.
\begin{definition}[\textbf{Pareto efficiency for scored-based rules}]
A committee $S_1$ dominates a committee $S_2$ if
(\romannumeral1) each voter scores $S_1$ at least as high as $S_2$ (for $v\in V$ it holds that $f_v(S_1) \geq f_v(S_2)$), and (\romannumeral2) there is one voter with a strictly higher score (there exists $u\in V$ with $f_u(S_1) > f_u(S_2)$).
A multi-winner score-based rule $R$ satisfies Pareto efficiency if $R$ never outputs dominated committees.
\end{definition}
}

Our next axiom, \emph{Committee Monotonicity}, requires that a target committee size increase should not result in the elimination of any of the currently selected candidates. 
This requirement arises when we are looking for the “best” candidates.
An example is a hiring process in which the number of candidates is not determined beforehand.
A committee monotone rule actually produces a ranking of candidates, indicating which one should be hired for an extra position~\cite{lackner2020multi}.
%
%
%
%
\begin{definition}
[\bf Committee Monotonicity]
\label{def: comm_mono_multi}
Given an integer $t\geq 1$, a $t$-stage multi-winner voting rule $\mathcal{R}$ is committee monotone if, for any election $E = (C, V)$ and $t$-dimensional vectors $\vec{v}_1 = (k^1_1, k^1_2,\ldots, k^1_t)$ and $\vec{v}_2 = (k^2_1, k^2_2,\ldots, k^2_t)$ with $k^1_t+1 = k^2_t$ in the last dimension, we have (\romannumeral1) if a committee $S \in \mathcal{R}(E,\vec{v}_1)$ then there exists a $S' \in \mathcal{R}(E,\vec{v}_2)$ such that $S \subset S'$; (\romannumeral2) if $S \in \mathcal{R}(E,\vec{v}_2)$ then there exists a $S' \in \mathcal{R}(E,\vec{v}_1)$ such that $S' \subset S$.
\end{definition}
%
%

The following two axioms, \emph{Candidate Monotonicity} and \emph{Consistency}, are generally desirable if one is interested in rules that are fair to candidates.
As Janson~\cite{janson2016phragmen} points out, no perfect election exists, but particularly disturbing are cases when changing some votes in favor of an elected candidate may result in her losing the election. 
Candidate Monotonicity is essential when candidates are not inanimate objects.
\begin{definition}[\bf Candidate Monotonicity]
\label{def:candidate_monotonicity}
A score-based rule $R$ is candidate monotone iff $\forall E = (C, V), k\in [m]$ and $c\in C$, if $c\in S$ for some $S\in R(E, k)$, then if we shift $c$ one position forward in a voter $v$ and obtain an election $E'$, we have $c\in S'$ for some $S'\in R(E', k)$.
\end{definition}
\noindent Consistency requires that if two disjoint groups of voters have the same election result, a voting rule should also arrive at this outcome if the two groups are united.
\begin{definition}[\bf Consistency]
\label{def:consistency}
\sloppy
A multi-winner voting rule $R$ is consistent iff $\forall E_1(C, V_1), E_2(C, V_2)$ and $k\in [m]$, if $R(E_1, k)\cap R(E_2, k)\neq \varnothing$, then we have $R((C, V_1 + V_2), k) = R(E_1, k)\cap R(E_2, k)$.
\end{definition}

\paragraph{\textbf{Remark on Axiom Selection.}}
The reason why we choose the aforementioned axioms for our study is based on the belief that a "good" multi-stage multi-winner rule has distinct meanings when applied in different scenarios. Firstly, the objective may be to select the "optimal" committee, necessitating the consideration of Committee Monotonicity (Definition~\ref{def: comm_mono_multi}). 
    However, in other cases, fairness assumes greater importance. In such instances, we typically examine the axioms of Solid Coalitions (Definition~\ref{def:solid_coalition}), 
    aiming for the elected members to represent the opinions of voters, thereby prioritizing fairness to voters. If our focus shifts towards fairness to the candidates and the need for the multi-stage rule to be easily understandable by them, we then take Candidate Monotonicity (Definition~\ref{def:candidate_monotonicity}) 
    and Consistency (Definition~\ref{def:consistency}) into account.



\section{Axiomatic Study for Multi-Stage Score-Based Rules}
\label{sec: score}
In this section, we present the axiomatic study for score-based rules. 
We first demonstrate that if a set of single-stage rules satisfies Solid Coalition, this axiom can be maintained in multi-stage voting scenarios, where each stage employs one of these single-stage rules.
Subsequently, we provide election examples to illustrate that the axioms of Committee Monotonicity, Candidate Monotonicity, and Consistency are violated in the multi-stage setting.
\subsection{Preserving Axioms in Multi-Stages}
\label{sec:score_preserve}
The following theorem suggests that when employing rules with the Solid Coalition property at each stage, the multi-stage voting rule also satisfies the Solid Coalition property.
For example, we use SNTV rule in each stage of STV procedure, and SNTV satisfies the Solid Coalition \cite{elkind2017properties}.
Consequently, STV inherits this property throughout its multi-stage process.
\begin{theorem}[\bf Solid Coalition preserves in multi-stage voting]
    \label{thm:solid coalitions}
    Let $t\geq 1$ be an integer and $\mathcal{R} = (R_1, R_2, \ldots, R_t)$ be a $t$-stage multi-winner voting rule.
    If $R_r$ satisfies Solid Coalition for each $r\in [t]$, $\mathcal{R}$ also satisfies Solid Coalition.
\end{theorem}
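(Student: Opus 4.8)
The plan is to prove the statement by induction on the voting stages, tracking a single ``solidly supported'' candidate through the successive shortlists. First I would fix notation: let $E = (C,V)$ with $|C| = m$ and $|V| = n$, let $\vec{v} = (k_1,\ldots,k_t)$ with $m > k_1 > \cdots > k_t$, and interpret Solid Coalition for the multi-stage rule $\mathcal{R}$ with respect to the final committee size $k_t$. That is, I would assume some candidate $c$ is ranked first by at least $\frac{n}{k_t}$ voters in $E$, and aim to show $c \in S$ for every $S \in \mathcal{R}(E, \vec{v})$.

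The core claim to establish by induction on $r \in \{0,1,\ldots,t\}$ is: for every valid trajectory $(S_0, S_1, \ldots, S_t)$ certifying a committee in $\mathcal{R}(E,\vec{v})$, the candidate $c$ lies in $S_r$. The base case $c \in S_0 = C$ is immediate. For the inductive step, assuming $c \in S_{r-1}$, I would invoke Assumption~\ref{assumption:ranking_ordeer}: since $p^{S_{r-1}}_v$ is merely the restriction of $p_v$ to $S_{r-1}$, any voter who ranks $c$ first among all of $C$ still ranks $c$ first among $S_{r-1}$ (because $c \in S_{r-1}$ and $S_{r-1} \subseteq C$). Hence the number of voters ranking $c$ first in the stage-$r$ election $E_r = (S_{r-1}, V)$ is at least the number ranking $c$ first in $E$, which is at least $\frac{n}{k_t}$.

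The key quantitative step is then the threshold comparison: because $k_r \geq k_t$ for every $r \in [t]$, we have $\frac{n}{k_t} \geq \frac{n}{k_r}$, so at least $\frac{n}{k_r}$ voters rank $c$ first in $E_r$. Applying the Solid Coalition property of $R_r$ to the election $E_r$ with target size $k_r$, the candidate $c$ belongs to \emph{every} committee in $R_r(E_r, k_r)$; in particular $c \in S_r$, since $S_r \in R_r(E_r, k_r)$. This closes the induction and yields $c \in S_t = S$ for every certifying trajectory, hence $c$ is in every committee of $\mathcal{R}(E,\vec{v})$.

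I expect the main subtlety --- rather than a genuine obstacle --- to lie in coupling the two ingredients of the inductive step correctly: the preservation of ``ranked-first'' status requires $c$ to still be present in the current pool $S_{r-1}$ (supplied by the induction hypothesis), while the threshold inequality relies crucially on the monotonicity $k_1 > \cdots > k_t$, so that the single global threshold $\frac{n}{k_t}$ dominates every per-stage threshold $\frac{n}{k_r}$. A point worth stating explicitly is that Solid Coalition of $R_r$ guarantees $c$ in \emph{all} winning committees at stage $r$; this is precisely what lets the argument apply uniformly to an arbitrary trajectory and conclude membership in \emph{every} committee of the multi-stage output, not merely in some committee.
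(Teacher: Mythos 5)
Your proof is correct and follows essentially the same route as the paper's: both argue that the number of first-place votes for $c$ cannot decrease as the candidate pool shrinks (via Assumption~\ref{assumption:ranking_ordeer}) and that the threshold $\frac{n}{k_t} \geq \frac{n}{k_r}$ lets each stage's Solid Coalition property carry $c$ into $S_r$. Your version merely makes the paper's ``we can similarly show'' explicit as a formal induction on the trajectory, which is a welcome but not substantively different elaboration.
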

\begin{proof}
    %
    %
    %
    %
    %
    %
    %
    %
    %
    %
    Fix any $E=(C,V)$ and vector $\Vec{v}=(k_1,k_2,\dots,k_t)$.
    %
    Suppose that there are at least $\frac{n}{k_t}$ voters ranking a candidate $c$ first. 
    Due to the fact that $R_1$ satisfies the Solid Coalition property and that at least $\frac{n}{k_t} \geq \frac{n}{k_1}$ voters rank $c$ first,
    %
    %
    $c$ must be one of the winners in $R_1$. 
    We observe that the number of voters who assign the highest rank to $v$ will not decrease in the successive stages of voting.
    %
    Therefore, we can similarly show that $c$ appears in each stage's winning committees, which completes the proof.
    %
    %
    \qed
\end{proof}

\subsection{Violating Axioms in Multi-Stages}
\label{sec: violating score}
In this section, we discuss the axioms including Committee Monotonicity, Candidate Monotonicity, and Consistency.

%
%
%
%

Before presenting the main theorem in this section, we first clarify the non-triviality for multi-stage $(\beta,\Vec{\gamma})$ rules.
%
%
%
%
%
For a multi-stage $(\ell_1,\Vec{\gamma})$-rule, any stage of voting that applies a position score function $\gamma^{m,k}$ such that $\gamma^{m,k}(1) = \gamma^{m,k}(m)$ has all the committees of size $k$ as the winning committee.
Moreover, in the case of $\beta = \ell_{\max}$, 
if $\gamma^{m, k}(1)= \gamma^{m, k}(m-k+1)$, 
then all the committees of size $k$ have the same score of $f_v(S) = \gamma^{m,k}(1)$ for each voter $v$.
These cases are deemed to be of triviality and minimal merit consideration.
%
Thus, we propose the following definition of rationality to specify the non-trivial rules.
\begin{definition}
[\bf Rationality of $(\beta, \vec{\gamma})$-rules] 
\label{def: ration_l1}
We define a multi-stage $(\beta,\vec{\gamma})$-rule as rational if, for any given pair $(m, k)$, the following conditions hold:
\begin{itemize}
    \item if $\beta = \ell_1$, $\gamma^{m, k}(1) > \gamma^{m, k}(m)$;
    \item if $\beta = \ell_{\max}$, $\gamma^{m, k}(1) > \gamma^{m, k}(m-k+1)$.
\end{itemize}
%
\end{definition}
%

%
%


We now show that the axioms of Committee Monotonicity, Candidate Monotonicity, and Consistency do not preserve in any rational multi-stage $(\beta, \vec{\gamma})$-rule.
Here we begin with the structure of our counter-example for Candidate Monotonicity.
The proof ideas for Committee Monotonicity and Consistency are analogous.
%

Consider an election process that, in the $(t-1)$-th stage, selects two candidates from a pool of three, and in the final stage, chooses one candidate from the previously selected two.
Let there be three candidates $a, b, c$ before the $(t-1)$-th stage.
The main structure that we use in the construction of voters' preferences is a cycle of $(a,b,c)$ (Definition~\ref{def: cyc}).
The characteristic of this cycle is that each candidate experiences an equal number of occurrences in every possible position.
Specifically, to form a cycle of $(a,b,c)$, we can construct a Group of three voters whose preference order are $a\succ b\succ c$, $b\succ c\succ a$, and $c\succ a\succ b$ respectively. 
%
The voter set is formed by duplicating this Group of voters multiple copies.
%
Due to this characteristic of cycles, each candidate can be the final winner. 
It can be observed that the final winning candidate will be $a$, if $c$ is eliminated in the $(t-1)$-th stage, while the final winner is $c$, if $b$ is knocked out first.
%
%
With rationality of $\vec{\gamma}$, there are two possibilities: $\gamma^{3, 2}(2) > \gamma^{3, 2}(3)$, or $\gamma^{3, 2}(1) > \gamma^{3, 2}(2)$.
In the former case, we transfer a voter with preference list $c\succ b \succ a$ to that with $c\succ a \succ b$.
If the latter, we shift $a$ one position forward in a voter with preference list $b\succ a \succ c$.
%
%
Consequently, $c$ becomes the sole winner in the last stage since $b$ becomes less important and must be eliminated in the $(t-1)$-th stage.
Therefore, the axiom of Candidate Monotonicity is violated.
%
%

The proofs for Committee Monotonicity and Consistency are based on similar constructions of such cycles.
Formally, we define the cycle of a sequence and the permutation of a voter's preference list.
\begin{definition}
[\bf The cycle of a sequence]
\label{def: cyc}
Given a sequence $s = (s_1, s_2, \ldots, s_m)$ of $m$ elements and an integer $i\in [m]$, the cycle $\rho_i(s)$ is the sequence $s'$ with:
\[
s'_j = \left\{\begin{matrix}
s_{j+i} &  j\leq m - i, \\
s_{j + i - m} & \text{otherwise}.
\end{matrix}\right.
\]
\end{definition}
\noindent
In other words, $\rho_i(s)$ shifts all elements in the sequence $s$ by $i$ positions to the right. 
If an element is shifted past the end of the sequence, it wraps around to the front of the sequence.
\begin{definition}
[\textbf{Full permutation of a sequence}]
\label{def: perm}
A permutation is a bijection of a sequence to itself.
Let $\Pi(s)$ denote the set of all permutations of sequence $s$.
\end{definition}
We use the aforementioned concepts to simplify the description of preference lists. For example, consider a candidate set $C = \{a, b, c, d, e\}$. A ranking $a\succ \rho_1(b,c,d)\succ e$ can be interpreted as $a\succ c\succ d\succ b\succ e$.

%

%

\begin{theorem}[\textbf{Committee Monotonicity does not preserve in a rational multi-stage $(\beta,\vec{\gamma})$-rule}]
\label{thm: l1l1_comm_mono}
Let $t\geq 2$ be an integer, $\beta\in\{\ell_1, \ell_{\max}\}$, and $\mathcal{R}$ be a rational $t$-stage $(\beta,\vec{\gamma})$-rule. 
Then $\mathcal{R}$ does not satisfy Committee Monotonicity.
\end{theorem}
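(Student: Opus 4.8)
The plan is to establish the negative statement by exhibiting one election $E$ together with two \emph{valid} size vectors $\vec v_1,\vec v_2$ whose last coordinates satisfy $k^1_t+1=k^2_t$, for which clause (\romannumeral1) of Definition~\ref{def: comm_mono_multi} fails: a winning committee of size $k^1_t$ that is contained in no winning committee of size $k^2_t$. Since Definition~\ref{def: comm_mono_multi} links the two vectors only through their last coordinate, I am free to let $\vec v_1$ and $\vec v_2$ also differ in the earlier stages. I will take the two final sizes to be $1$ and $2$, so the decisive comparison is between $\vec v_1=(\dots,2,1)$ and $\vec v_2=(\dots,3,2)$; note that $\vec v_2$ already forces at least four candidates.

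First I would reduce an arbitrary $t\ge 2$ to a two-active-stage core on four candidates $a,b,c,d$. I introduce $t-2$ dummy candidates that every voter ranks below $a,b,c,d$, and let the first $t-2$ stages (taken identical for $\vec v_1$ and $\vec v_2$) strip the dummies one at a time, leaving the common pool $\{a,b,c,d\}$. Rationality (Definition~\ref{def: ration_l1}) is what guarantees this: for $\ell_1$ a bottom-pinned candidate has strictly smallest aggregate score because $\gamma^{m,k}(1)>\gamma^{m,k}(m)$, and for $\ell_{\max}$ a candidate topped by nobody is strictly the one deleted because $\gamma^{m,k}(1)>\gamma^{m,k}(m-k+1)$; so each dummy is eliminated before any real candidate. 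It therefore suffices to construct the four-candidate core with $\vec v_1=(\dots,2,1)$ and $\vec v_2=(\dots,3,2)$.

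In the core I would engineer the profile so that $d$ is the globally weakest candidate (hence eliminated first on both routes), while the triple $\{a,b,c\}$ carries a Condorcet-cycle–type asymmetry built from the cycles of Definition~\ref{def: cyc}. The crucial feature I aim for is a head-to-head/aggregate reversal: candidate $b$ beats $a$ in the two-element race $\{a,b\}$ and, with $d$ present, $b$ outranks $c$, yet once $d$ is deleted its support transfers to $c$ and lifts $c$ above $b$ inside $\{a,b,c\}$. Then under $\vec v_1$ the first active stage keeps the top two $\{a,b\}$ (the triple $\{a,b,c\}$ is never formed, so $c$ is never boosted), and the last stage returns the pairwise winner $\{b\}$; under $\vec v_2$ the first active stage keeps $\{a,b,c\}$, and the last stage, now evaluating the full triple, discards the weakest candidate $b$ and returns the unique committee $\{a,c\}$. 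Consequently $\{b\}$ is a size-$1$ winner contained in no size-$2$ winner, so clause (\romannumeral1) is violated, provided one checks that no tie makes $\{a,b\}$ or $\{b,c\}$ an alternative size-$2$ winner.

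To make this work for an arbitrary rational $\vec\gamma$ I would, exactly as in the Candidate Monotonicity sketch, start from a balanced cycle of $(a,b,c)$ (so every position-score function evaluates $a,b,c$ equally) and install the reversal by small perturbations, splitting on which of $\gamma^{3,2}(1)>\gamma^{3,2}(2)$ or $\gamma^{3,2}(2)>\gamma^{3,2}(3)$ holds---at least one is guaranteed by rationality---so that the intended triple-loser is strictly determined and no tie reintroduces a nesting witness. The main obstacle is handling the two norms uniformly: for $\ell_1$ the selection of two out of three (Definition~\ref{def: abgrule}) deletes the candidate of lowest aggregate $\gamma^{3,2}$-score, whereas for $\ell_{\max}$ it deletes the candidate with the fewest first-place votes inside the triple, so the two cases demand different placements of the perturbing voters, and one must verify that the chosen vote multiplicities simultaneously force all four stagewise comparisons---the top two of the quadruple, the top three of the quadruple, the pair $\{a,b\}$, and the triple $\{a,b,c\}$---in the intended directions. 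The same cycle machinery then yields the analogous counterexamples for Candidate Monotonicity and Consistency.
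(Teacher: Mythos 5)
Your skeleton is essentially the paper's: both arguments pick final sizes $1$ and $2$, exploit the fact that Definition~\ref{def: comm_mono_multi} ties the two vectors only through their last coordinate so the earlier stages may differ, build a profile that is cyclically balanced on a core triple, and arrange for the asymmetry to surface only after an auxiliary candidate is deleted, so that the route which never forms the larger intermediate pool certifies a singleton winner contained in no winner of the other route. (The paper uses five candidates with $\vec v$'s $(2,1)$ versus $(4,2)$ for $\ell_1$, and a separate seven-candidate profile for $\ell_{\max}$.) So this is not a different route; the question is whether your plan can actually be completed, and there are two places where, as stated, it cannot.

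First, you propose to start from an exactly balanced cycle of $(a,b,c)$ and ``install the reversal by small perturbations,'' case-splitting only on $\gamma^{3,2}$. But a ballot perturbation that demotes $b$ inside the triple also changes the position distributions of $a,b,c$ inside the quadruple, and the stage functions $\gamma^{4,2}$ and $\gamma^{4,3}$ are independent adversaries about which rationality gives only one endpoint inequality each. For many admissible $\gamma^{4,2}$ the perturbed profile already makes $b$ strictly worst at the select-two-from-four stage, so $\{a,b\}$ is not a winning pair and $\{b\}\notin\mathcal{R}(E,\vec v_1)$: your existential witness disappears. The paper's construction avoids this by not perturbing at all; the to-be-deleted candidate is interleaved into otherwise balanced four-cycles (the shifts of $(b,c,a,e)$), so every pre-deletion stage sees an exact tie among $a,b,c$ under \emph{every} score function, and the gap $\gamma^{4,2}(1)+\gamma^{4,2}(2)>2\gamma^{4,2}(3)$ materializes only once $e$ is removed. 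You would need to bake the asymmetry into the cycles in the same way rather than perturb ballots. Second, for $\beta=\ell_{\max}$ the phrase ``the first active stage keeps the top two'' is not meaningful: the committee score is $\sum_v\max_{c\in S}\gamma(p_v(c))$, which is not a sum of individual candidate scores, so selecting two from four is a coverage problem rather than ``delete the two individually weakest.'' Dismissing that case as ``different placements of the perturbing voters'' skips exactly the part where the paper needs a genuinely different and larger construction (seven candidates and twelve hand-crafted ballots in Group~5). A further small repair: a bottom-pinned dummy is not automatically \emph{strictly} worst under $\ell_1$ rationality (under plurality-type scores it ties with any candidate never ranked first), and since the $\vec v_2$ side of the violation is universal over trajectories, the dummies' elimination must be forced in every trajectory, not just some.
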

\noindent 
According to this theorem, any multi-stage voting rule that consists of scoring-based rules does not exhibit Committee Monotonicity, even if the individual rules applied in every stage satisfy this property. 
For instance, when combining several Borda rules to form a multi-stage voting rule, the resulting multi-stage rule no longer exhibit the Committee Monotonicity, though each single-stage rule individually satisfies it.
\begin{proof}[Proof of Theorem~\ref{thm: l1l1_comm_mono}]
We construct a two-stage voting procedure, which indicates that Committee Monotonicity does not preserve in a rational multi-stage $(\beta, \vec{\gamma})$-rule.
Note that this example can be simply extended to arbitrary stages by adding dummy candidates.

For $(\ell_1, \vec{\gamma})$-rules, let $C = \{a, b, c, d, e\}, \vec{v}_1 = (4, 2)$, and $\vec{v_2} = (2, 1)$. 
Committee Monotonicity requires that for any $S\in \mathcal{R}(E, \vec{v}_2)$, there exists a $S'\in \mathcal{R}(E, \vec{v}_1)$ such that $S \subset S'$.
However, we show that it is possible to construct a set $V$ of voters such that
$\{a\}\in \mathcal{R}(E, \vec{v}_2)$ but $\mathcal{R}(E, \vec{v}_1) = \{\{b, c\}\}$, which does not include a possible winning committee $S'$ such that $\{a\} \subset S'$.
Due to the rationality assumption, either $\gamma^{4, 2}(1)>\gamma^{4, 2}(3)$ or $\gamma^{4, 2}(2)>\gamma^{4, 2}(4)$ holds.
\begin{itemize}
    \item $\gamma^{4, 2}(1)>\gamma^{4, 2}(3)$. We construct $V$ with $6$ groups of voters, each having preference lists as follows:
\begin{align*}
    &\text{Group 1:}\quad 1\times\ \rho_i(b, c, a, e)\succ d\quad \forall i\in \{1,2,3,4\};\\
    &\text{Group 2:}\quad 1\times\ \rho_i(c, b, a, e)\succ d\quad \forall i\in \{1,2,3,4\};\\
    &\text{Group 3:}\quad 1\times\ \rho_i(a, b, c, d)\succ e\quad \forall i\in \{1,2,3,4\};\\
    &\text{Group 4:}\quad 1\times\ \rho_i(a, c, b, d)\succ e\quad \forall i\in \{1,2,3,4\};\\
    &\text{Group 5:}\quad 200 \times\ \pi \succ e\quad \forall \pi\in \Pi(a, b, c, d);\\
    &\text{Group 6:}\quad 100\times\ \pi \succ d\quad \forall \pi\in \Pi(a, b, c, e).
\end{align*}

We begin our analysis with $\vec{v}_1 = (4, 2)$. In the first stage, $e$ is eliminated as a result of preference of the voters in Group 5. 
Then in the second stage, $d$ cannot win due to the presence of Group 6. 
The scores of $a, b$, and $c$ are initially equivalent. 
However, after $e$ is knocked out, the score of $a$ is less than that of $b$ and $c$ since $\gamma^{4, 2}(1)+\gamma^{4, 2}(2)>2\gamma^{4, 2}(3)$. Therefore, $\{b, c\}$ is the final winning committee.

On the other hand, consider the voting procedure that applies $\mathcal{R}$ on $(E, \vec{v}_2)$.
In the first stage, $d$ and $e$ must be eliminated because of the existence of Group 5 and Group 6. 
Further, the scores of $a, b$ and $c$ are the same, and thus $\{a, b\}$ is a possible winning committee. 
In the second stage, the scores of both candidates are also the same, so $a$ is a possible winner. 
The Committee Monotonicity is violated.

\item $\gamma^{4, 2}(2)>\gamma^{4, 2}(4)$. We modify Group 1 in the voter set $V$ constructed above as $1\times\ d\succ \rho_i(b, c, a, e)$ $\forall i\in \{1,2,3,4\}$ and Group 2 as $1\times\ d\succ \rho_i(c, b, a, e)$ $\forall i\in \{1,2,3,4\}$. 
Then the difference between the score of $a$ and $b, c$ is $\gamma^{4, 2}(2)+\gamma^{4, 2}(3)-2\gamma^{4, 2}(4) > 0$.
The previously established analysis remains applicable.
\end{itemize}

\noindent Next, we provide counter-examples for $(\ell_{\max},\vec{\gamma})$-rules.
We also construct a two-stage election. Let $C = \{a, b, c, d, e, f, g\}$ and $\vec{v}_1 = (2, 1), \vec{v}_2 = (5, 2)$. The voter set $V$ consists of 5 groups of voters:
\begin{align*}
    \text{Group 1:}& \quad 200\times\ \pi\succ g\quad\forall \pi \in \Pi(a, b, c, d, e, f);\\
    \text{Group 2:}& \quad 200\times\ \pi\succ f\quad\forall \pi \in \Pi(a, b, c, d, e, g);\\
    \text{Group 3:}& \quad 100\times\ \pi\succ e\quad\forall \pi \in \Pi(a, b, c, d, f, g);\\
    \text{Group 4:}& \quad 100\times\ \pi\succ d\quad\forall \pi \in \Pi(a, b, c, e, f, g);\\
    \text{Group 5:}
    &\quad f\succ b\succ d\succ e\succ c\succ g\succ a\quad f\succ c\succ d\succ e\succ b\succ g\succ a\\
    &\quad f\succ b\succ d\succ e\succ a\succ g\succ c\quad f\succ c\succ d\succ e\succ a\succ g\succ b\\
    &\quad d\succ a\succ f\succ g\succ b\succ e\succ c\quad d\succ a\succ f\succ g\succ c\succ e\succ b\\
    &\quad d\succ f\succ b\succ e\succ c\succ a\succ g\quad d\succ f\succ c\succ e\succ b\succ a\succ g\\
    &\quad d\succ e\succ a\succ f\succ b\succ c\succ g\quad d\succ e\succ a\succ f\succ c\succ b\succ g\\
    &\quad d\succ f\succ b\succ e\succ a\succ c\succ g\quad d\succ f\succ c\succ e\succ a\succ b\succ g.
\end{align*}
%
%
We next show that $\{a\}\in \mathcal{R}(E, \vec{v}_1)$ but $\mathcal{R}(E, \vec{v}_2) = \{b, c\}$.
In $\mathcal{R}(E, \vec{v}_1)$, it is guaranteed by Group 1-4 that candidates $d,e,f,g$ are eliminated in the first stage.
Additionally, the voters in Group 5 ensure that the score of the committee $\{a,b\}$ is greater than or equal to that of $\{b,c\}$ and $\{a,c\}$.
This is because the score of $\{a,b\}$ is $4(\gamma^{7,2}(2)+\gamma^{7,2}(3)+\gamma^{7,2}(4))$, while both $\{a,c\}$ and $\{b,c\}$ score $4(\gamma^{7,2}(2)+\gamma^{7,2}(3)+\gamma^{7,2}(5))$.
Thus $\{a,b\}$ is the winning committees in the first stage.
In the second stage, $\{a\}$ and $\{b\}$ have identical scores, therefore $\{a\}\in \mathcal{R}(E, \vec{v}_1)$.
In $\mathcal{R}(E, \vec{v}_2)$, the voters from Groups 1 - 4 ensures that $\{a,b,c,d,e\}$ is the only winning committee in the first stage.
%
%
Then in the second stage, Groups 3 and 4's voters guarantee the elimination of $d,e$.
Through calculation of the scores given by the voters in Group 5, it can be verified that both $\{a,b\}$ and $\{a,c\}$ achieve a score of $2\gamma^{5,2}(1)+4\gamma^{5,2}(2)+2\gamma^{5,2}(3)+4\gamma^{5,2}(4)$, while $\{b,c\}$ gets a score of $4\gamma^{5,2}(1)+4\gamma^{5,2}(2)+2\gamma^{5,2}(3)+2\gamma^{5,2}(4)$.
Consequently, we have $\mathcal{R}(E, \vec{v}_2) = \{\{b, c\}\}$.
\qed
\end{proof}
%
%
%

\begin{theorem}[\textbf{Candidate Monotonicity does not preserve in a rational multi-stage $(\beta,\vec{\gamma})$-rule}]
\label{thm: l1l1_cadi_mono}
Let $t\geq 2$ be an integer, $\beta\in\{\ell_1, \ell_{\max}\}$, and $\mathcal{R}$ be a rational $t$-stage $(\beta,\vec{\gamma})$-rule. 
Then $\mathcal{R}$ does not satisfy Candidate Monotonicity.
\end{theorem}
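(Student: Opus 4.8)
The plan is to refute candidate monotonicity with a single election in which one forward shift of a candidate demotes it from winner to loser. I would use the two-stage skeleton sketched just before Definition~\ref{def: cyc}: three candidates $a,b,c$ with $\vec{v}=(2,1)$, so stage $t-1$ keeps two of the three and stage $t$ elects one of those two. For general $t\ge 2$ I would pad with dummy candidates ranked last by every voter; these are peeled off during the first $t-2$ stages and leave $\{a,b,c\}$ as the pool entering stage $t-1$, so it suffices to settle the two-stage case.

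The heart of the construction is a cyclically symmetric profile: $p$ copies of the cycle $a\succ b\succ c$, $b\succ c\succ a$, $c\succ a\succ b$ together with $q$ copies of its reverse orientation $a\succ c\succ b$, $c\succ b\succ a$, $b\succ a\succ c$. Because each candidate then occupies each position equally often, for either norm $\beta$ all three size-$2$ committees receive identical stage-$(t-1)$ scores, so every pair is an admissible shortlist. Taking $p>q$ (say $p=2$, $q=1$) makes the pairwise majorities a Condorcet cycle $a\succ b\succ c\succ a$; in particular $a$ strictly beats $b$ and $c$ strictly beats $a$. Hence before any shift the trajectory that keeps $\{a,b\}$ elects $a$, so $\{a\}\in\mathcal{R}(E,\vec{v})$, and candidate monotonicity would require $a$ to survive a forward shift.

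I would then shift $a$ one position forward past $b$ in a single voter. This is the delicate point and the reason for including the reverse copies: the plain cycle has no voter ranking $b$ immediately above $a$, whereas the reverse copies supply voters $c\succ b\succ a$ and $b\succ a\succ c$ to perturb, while $p>q$ still guarantees $c\succ a$. The shift merely swaps the order of $a$ and $b$, so it leaves the $a$-versus-$c$ comparison, and thus the final-stage majority on $\{a,c\}$, untouched. It improves $a$ and worsens $b$ at stage $t-1$, so $\{a,c\}$ becomes the unique top committee, $b$ is eliminated, and stage $t$ elects $c$ because $c$ beats $a$; therefore $\{a\}\notin\mathcal{R}(E',\vec{v})$ and the axiom fails. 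For $\beta=\ell_1$ the committee score is additive over candidates, and rationality $\gamma^{3,2}(1)>\gamma^{3,2}(3)$ splits into $\gamma^{3,2}(2)>\gamma^{3,2}(3)$ (perturb a $c\succ b\succ a$ voter) and $\gamma^{3,2}(1)>\gamma^{3,2}(2)$ (perturb a $b\succ a\succ c$ voter), the amount $a$ gains on $b$ being the relevant positive $\gamma$-gap. For $\beta=\ell_{\max}$ rationality gives $\gamma^{3,2}(1)>\gamma^{3,2}(2)$, and I would compute the three committee $\ell_{\max}$-scores directly to verify that perturbing a $b\succ a\succ c$ voter lifts $\{a,c\}$ by $\gamma^{3,2}(1)-\gamma^{3,2}(2)$ above the unchanged $\{a,b\}$, making $\{a,c\}$ the strict winner.

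The main obstacle is the counterintuitive demand that promoting $a$ ultimately removes it: the profile must simultaneously tie all pairs at stage $t-1$ for every admissible $\gamma$, realize the strict relation $c\succ a$, contain a voter with $b$ directly above $a$, and be perturbable without disturbing the $a$-versus-$c$ order — precisely what the cycle-plus-reverse profile with $p>q$ delivers. What remains is routine: the $\ell_1$ versus $\ell_{\max}$ case bookkeeping above and the dummy padding for $t>2$.
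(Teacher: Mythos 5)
Your proposal is correct and follows essentially the same route as the paper: the paper's Group 1 plus Group 2 is exactly your ``$p$ forward cycles plus $q$ reverse cycles with $p>q$'' profile (with $p=11$, $q=1$), and it uses the same case split on which $\gamma^{3,2}$-gap is positive to decide whether to perturb a $c\succ b\succ a$ or a $b\succ a\succ c$ voter, the same promotion of $a$ that eliminates $b$ and hands the final stage to $c$, and the same dummy-padding for $t>2$. No substantive differences to report.
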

\noindent 
To provide an intuition for our analysis, consider a point runoff system (see e.g.~\cite{smith1973aggregation}), which involves the successive use of the $\ell_1$ scoring function to eliminate candidates with lower scores until a single winner is obtained. 
It can be considered as a specific instance of our framework with $\beta = \ell_1$.
Smith \cite{smith1973aggregation} demonstrated that no point runoff system involving two or more stages exhibits Candidate Monotonicity.
In Theorem \ref{thm: l1l1_cadi_mono}, we extend our analysis beyond the $\ell_1$ scoring function used in point runoff systems and include the $\ell_{\max}$ scoring function, such as Chamberlin-Courant's (CC) rules.
%
%
We reach a more comprehensive conclusion that regardless of how we combine the $\ell_1$ and $\ell_{\max}$ scoring functions, the resulting multi-stage voting rule fails to satisfy Candidate Monotonicity.
%
%
\begin{proof}[Proof of Theorem~\ref{thm: l1l1_cadi_mono}]
We present below a counter-example which works for both $(\ell_1,\vec{\gamma})$-rules and $(\ell_{\max},\vec{\gamma})$-rules.
%
%
Let there be three candidates $a,b,$ and $c$.
The target size of the winning committees in each stage is represented by $\vec{v} = (2, 1)$.
%
%
There are two groups of voters in $V$ with preferences as below.
\begin{align*}
    &\text{Group 1:}\quad 10\times\ \rho_i(c, a, b)\quad \forall i \in \{1,2,3\};\\
    &\text{Group 2:}\quad 1\times\ \pi\quad \forall \pi \in \Pi(a, b, c).
\end{align*}
We run an election that selects a winning committee of size $2$ in the first stage and a sole winner in the second stage.
If a $(\beta, \vec{\gamma})$-rule is applied in each stage, $\{a,b\}$ is one of the winning committees for the first stage and $a$ is the final winner between $a$ and $b$.

If $\gamma^{3, 2}(1) = \gamma^{3, 2}(2) > \gamma^{3, 2}(3)$ holds, we shift $a$ one position forward in $c\succ b\succ a$. If $\gamma^{3, 2}(1) > \gamma^{3, 2}(2)$, we shift $a$ one position forward in $b\succ a \succ c$.
Then $\{a, c\}$ becomes the sole winning committee in both cases under the assumption that $\vec{\gamma}$ is rational. 
In the second stage, $\{c\}$ is the final winning committee, which does not include $a$.
%

%
%

%
%
%
%
This result can be generalized to any $m \geq 3$, $t \geq 2$, and $\vec{v}=(k_1, \dots, k_{t-1}, k_t)$ with $k_t = k_{t-1}-1$.
%
Specifically, we can add $(k_{t} - 1)$ candidates to the front and $(k_1 - k_{t-1})$ to the end of each voter's preference list presented above.
Then the candidates at the end of the preference lists will be eliminated in the first $(t-2)$ stages, and the candidates who are positioned at the front will be all included in the final winning committee. 
All other procedures remain the same.
\qed
\end{proof}

%
%

\begin{theorem}[\textbf{Consistency does not preserve in a rational multi-stage $(\beta,\vec{\gamma})$-rule}]
\label{thm: l1l1_consistency}
Let $t\geq 2$ be an integer, $\beta\in\{\ell_1, \ell_{\max}\}$, and $\mathcal{R}$ be a rational $t$-stage $(\beta,\vec{\gamma})$-rule. 
Then $\mathcal{R}$ does not satisfy Consistency.
\end{theorem}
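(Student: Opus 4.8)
The plan is to reuse the two-stage, three-candidate gadget from Theorems~\ref{thm: l1l1_comm_mono} and~\ref{thm: l1l1_cadi_mono}. Fix $C=\{a,b,c\}$ and $\vec{v}=(2,1)$, so that the first stage keeps the top-scoring committee of size $2$ (equivalently, eliminates one candidate) and the second stage, selecting one of two survivors, reduces for both $\beta\in\{\ell_1,\ell_{\max}\}$ to a pairwise majority between the two survivors (this uses rationality, which forces $\gamma^{2,1}(1)>\gamma^{2,1}(2)$). Following Definition~\ref{def:consistency}, I will exhibit two voter sets $V_1,V_2$ over $C$ with $\mathcal{R}((C,V_1),\vec{v})=\mathcal{R}((C,V_2),\vec{v})=\{\{a\}\}$, so their intersection is the nonempty set $\{\{a\}\}$, while $\mathcal{R}((C,V_1+V_2),\vec{v})$ excludes $\{a\}$; this immediately contradicts Consistency. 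As in the other proofs, restricting to two stages and three candidates is without loss: padding each preference list with $(k_t-1)$ always-surviving candidates at the front and $(k_1-k_{t-1})$ always-eliminated candidates at the back extends the example to any $m\ge 3$, $t\ge 2$ and any $\vec{v}$ with $k_t=k_{t-1}-1$.

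The core device is the Condorcet cycle of Definition~\ref{def: cyc}: a Group consisting of the three preferences $a\succ b\succ c$, $b\succ c\succ a$, $c\succ a\succ b$ places each candidate once in each position, so it contributes equally to the first-stage score of every size-$2$ committee under both $\ell_1$ and $\ell_{\max}$, yet it induces the cyclic pairwise majorities $a\succ b\succ c\succ a$ that drive the second stage. I will build $V_1$ and $V_2$ as copies of this cycle together with small perturbation blocs that steer the first-stage elimination and the runoff. In $V_1$ the blocs make $c$ the unique first-stage loser, leaving the runoff $\{a,b\}$, which the cyclic majority $a\succ b$ resolves in favour of $a$; hence $\mathcal{R}((C,V_1),\vec{v})=\{\{a\}\}$. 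In $V_2$ the blocs make $b$ the unique first-stage loser, leaving the runoff $\{a,c\}$; since the bare cycle would give $c\succ a$, the $V_2$ blocs additionally tip the $a$-versus-$c$ majority so that $a$ wins, giving $\mathcal{R}((C,V_2),\vec{v})=\{\{a\}\}$ as well.

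The violation then comes from the additivity of first-stage scores across voters. Because the cycles are neutral, the first-stage score of each candidate in $V_1+V_2$ is the sum of the two blocs' contributions; choosing the magnitudes so that $a$ is the middle candidate in each of $V_1,V_2$ but the strict minimum in the sum (e.g.\ $b$ tops $V_1$ by more than $a$ tops $c$ there, and symmetrically in $V_2$) makes $\{b,c\}$ the unique winning committee of the first stage on $V_1+V_2$. The only admissible trajectory then runs the second stage on $\{b,c\}$, whose winner is $b$ or $c$; in particular $\{a\}\notin\mathcal{R}((C,V_1+V_2),\vec{v})$, so $\mathcal{R}((C,V_1+V_2),\vec{v})\neq\{\{a\}\}$ and Consistency fails.

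The main obstacle is making the perturbation blocs work uniformly. First, rationality leaves two regimes for $\gamma^{3,2}$ — either $\gamma^{3,2}(1)=\gamma^{3,2}(2)>\gamma^{3,2}(3)$ or $\gamma^{3,2}(1)>\gamma^{3,2}(2)$ — and, exactly as in the case split of Theorem~\ref{thm: l1l1_cadi_mono}, the bloc producing a given strict score gap must be chosen according to which regime holds. Second, under $\beta=\ell_{\max}$ the first-stage score of a committee is not the sum of its members' position scores: eliminating a candidate $z$ means the committee $C\setminus\{z\}$ attains the largest total $\sum_v\gamma^{3,2}\big(\min(p_v(x),p_v(y))\big)$, so the blocs must be recomputed at the committee level to reproduce the same elimination pattern ($c$ in $V_1$, $b$ in $V_2$, $a$ in the union). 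Finally, all three first-stage comparisons and both runoffs must be made strict by a careful choice of bloc multiplicities, so that each $\mathcal{R}((C,V_i),\vec{v})$ is exactly the singleton $\{\{a\}\}$ and the union selects a unique committee disjoint from $\{a\}$; this bookkeeping, rather than any conceptual difficulty, is the technical heart of the argument.
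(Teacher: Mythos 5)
Your plan is sound, but it reaches the contradiction by a genuinely different mechanism than the paper. The paper's counterexample lives on five candidates with $\vec{v}=(4,1)$: the common winner $a$ survives the first stage in $V_1$, $V_2$, \emph{and} $V_1+V_2$, and what changes in the union is only \emph{which auxiliary candidate} ($e$, $d$, or $c$) is eliminated first; eliminating $c$ instead of $e$ or $d$ breaks a different small cycle (the $\rho_i(b,a,c)$ group rather than the $\rho_i(a,b,e)$ or $\rho_i(a,b,d)$ group), which flips the final pairwise outcome from $a$ to $b$. You instead arrange for the union to eliminate the common winner $a$ itself in the first stage, exploiting additivity of first-stage scores: $a$ is second of three in each electorate but strictly last in the sum. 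Both mechanisms are legitimate ways to break Consistency; yours is more economical (three candidates, and the failure is visible already at the first stage), while the paper's illustrates the subtler phenomenon that Consistency can fail even when the elimination order never touches the contested candidates. Your plan is executable as stated: for $\ell_1$ the first stage reduces to eliminating the candidate with minimum total position score, and for $\ell_{\max}$ with $k_{t-1}=m-1$ it reduces to eliminating the plurality loser, so in both regimes the bloc arithmetic you describe (gaps in $V_1$ and $V_2$ chosen so that $a$'s two second-place margins are each smaller than the corresponding first-place margins of $b$ and $c$) closes without difficulty. Two small practical remarks: using the \emph{reversed} cycle $(a,c,b)$ as the bulk of $V_2$ makes the $a$-versus-$c$ runoff go your way for free, so the perturbation blocs can stay small relative to the cycles and need not fight the cyclic majority; and, as in the paper's Theorem~\ref{thm: l1l1_cadi_mono}, the rationality case split for $\gamma^{3,2}$ under $\ell_1$ determines whether your blocs should promote a candidate to first position or demote one to last, since only one of $\gamma^{3,2}(1)>\gamma^{3,2}(2)$ and $\gamma^{3,2}(2)>\gamma^{3,2}(3)$ is guaranteed. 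What remains is to write out the explicit voter multiplicities, which is bookkeeping rather than a gap.
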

%
%
%
%

\noindent According to Skowron's work~\cite{skowron2019axiomatic}, only the scoring rules that generate weak linear orders over the committees exhibit Consistency. However, the multi-stage rules examined in this study fail to produce linear orders. This is due to the greedy nature of the multi-stage process, where candidates are eliminated at each stage based on their current contribution to the committee rather than their overall value. As a result, it cannot be asserted that the final winning committee necessarily outperform a committee consisting of candidates eliminated at different stages. Therefore, non-linear order multi-stage rules are deemed inconsistent.

\begin{proof}[Proof of Theorem~\ref{thm: l1l1_consistency}]
We present below a two-stage election that works as a counter-example for both $(\ell_1,\vec{\gamma})$-rules and $(\ell_{\max},\vec{\gamma})$-rules.
%
There are five candidates $a, b, c, d,$ and $e$.
The target size in each stage is represented by $\Vec{v} = (4, 1)$.
With rationality of $\vec{\gamma}$, we have at least one of the following inequalities holds: $\gamma^{4, 1}(1)>\gamma^{4, 1}(2), \gamma^{4, 1}(2)>\gamma^{4, 1}(3)$, or $\gamma^{4, 1}(3)>\gamma^{4, 1}(4)$. 
\begin{itemize}
    \item $\gamma^{4, 1}(1)>\gamma^{4, 1}(2)$. We construct $V_1$ as below:
    \begin{align*}
        &\text{Group 1:}\quad 1\times\ \rho_i(a,b,e)\succ c\succ d\quad \forall i\in\{1,2,3\};\\
        &\text{Group 2:}\quad 1\times\ \rho_i(b,a,c)\succ d\succ e\quad \forall i\in \{1,2,3\};\\
        &\text{Group 3:}\quad 300\times\ \pi\succ c\quad \forall \pi\in \Pi(a, b, d, e);\\
        &\text{Group 4:}\quad 100\times\ \pi\succ d\quad \forall \pi\in \Pi(a, b, c, e);\\
        &\text{Group 5:}\quad 400\times\ \pi\succ e\quad \forall \pi\in \Pi(a, b, c, d).
    \end{align*}
\noindent We construct Groups 3 - 5 to ensure that $e$ is knocked out in the first stage and the final winner cannot be $c$ or $d$.
%
%
%
It can be observed that the scores of $a$ and $b$ in Groups 2 - 5 are the same in both stages.
%
%
In Group 1, after $e$ is eliminated, the score of $a$ is $2\gamma^{(4, 1)}(1)+\gamma^{(4, 1)}(2)$, and that of $b$ is $\gamma^{(4, 1)}(1)+\gamma^{(4, 1)}(2)$.
%
Therefore, $a$ is the final winner.
%
Further, we construct $V_2$ similarly as follows:
\begin{align*}
    &\text{Group 1:}\quad 1\times\ \rho_i(a,b,d)\succ c\succ e\quad \forall i\in \{1,2,3\};\\
    &\text{Group 2:}\quad 1\times\ \rho_i(b,a,c)\succ d\succ e\quad \forall i\in \{1,2,3\};\\
    &\text{Group 3:}\quad 300\times\ \pi \succ c\quad \forall \pi\in \Pi(a, b, d, e);\\
    &\text{Group 4:}\quad 400\times\ \pi \succ d\quad \forall \pi\in \Pi(a, b, c, e);\\
    &\text{Group 5:}\quad 100\times\ \pi \succ e\quad \forall \pi\in \Pi(a, b, c, d).
\end{align*}
A similar analysis to that of $V_1$ indicates that $d$ is knocked out in the first stage, and the winner of the second stage is also $a$. 

However, upon a combination of $V_1$ and $V_2$, the candidate eliminated in the first stage shall be transferred to $c$, as the total number of votes ranking $c$ last is greater than that of $d$ and $e$; specifically, $300\times 2 > 400+100$.
%
After $c$ is eliminated, the scores of $a$ and $b$ are the same in Groups 1, 3, 4, and 5 in both $V_1$ and $V_2$.
In Group 2, the score of $b$ is $2\gamma^{4, 1}(1)+\gamma^{4, 1}(2)$ in the second stage, and that of $a$ is $\gamma^{4, 1}(1)+2\gamma^{4, 1}(2)$.
Thus the final winner becomes $b$. 

%
\item $\gamma^{4, 1}(2)>\gamma^{4, 1}(3)$ and $\gamma^{4, 1}(3)>\gamma^{4, 1}(4)$. We modify our example election by letting Group 1 of $V_1$ be $1\times c\succ \rho_i(a, b, e)\succ d\ \forall i\in \{1,2,3\}$.
%
Through such modification, $a$ gets a score of $2\gamma^{4, 1}(2)+\gamma^{4, 1}(3)$ and $b$ gets $\gamma^{4, 1}(2)+2\gamma^{4, 1}(3)$ in the second stage, so the winner is $a$.
Note that although $c$ performs better than $a$ and $b$ in Group 1, it cannot be the winner for the existence of Group 3.
We can implement a similar modification in Group 2 in $V_1$ and Groups 1 - 2 in $V_2$ to obtain a counter-example for the case of $\gamma^{4, 1}(2)>\gamma^{4, 1}(3)$.
The case of $\gamma^{4, 1}(3)>\gamma^{4, 1}(4)$ can also be solved with similar modification.
\qed
\end{itemize}

%
%
\end{proof}
\section{Discussion on Voting Rules Selection}
\label{sec:discussion}
This study's findings provide guidance for the selection of rules in multi-stage voting scenarios.
When seeking to choose a multi-stage score-based rule, some criteria such as Committee Monotonicity, Candidate Monotonicity, and Consistency should not be considered due to the absence of rules that satisfy them.
However, if the axiom of Solid Coalitions is considered important, a single-stage rule that satisfies Solid Coalitions (such as SNTV) can be employed to construct a multi-stage rule.

From the perspective of axioms, those of monotonicity, which requires that an increase in support for some elected candidates should not result in their knock-out, are unlikely to preserve in multi-stage voting.
The support for candidates may be increased by changing a voter's preference (Candidate Monotonicity) or by adding a set of voters in favor of them (Consistency).
The reason why the axioms of monotonicity usually do not preserve is that increasing support for certain candidates in these ways can change the outcome of the election for other candidates.
While the candidates receiving more support may still be elected by a single-stage rule, it is possible for the new members of this stage's winning committee to surpass them in the subsequent stages.

Notice that there is another commonly used family of multi-winner voting rules, called approval-based rules, under which voters are identified with their approval ballots. 
(See Appendix for formal definitions.)
In the case of approval-based rules, the axioms of Candidate Monotonicity and Consistency 
need not be taken into account. 
Basically, in multi-stage voting, approval-based rules are less prone to not preserving the desirable axioms as compared to score-based ones, specifically, Committee Monotonicity is preserved for approval-based rules but not for score-based rules.\footnote{Note that this observation may not hold if we take more general class of approval-based rules into consideration. For example, the “non-standard" rule of \emph{satisfaction approval voting} defines that the score of a committee $S$ given by a voter $v$ not only depends on $|A_v \cap S|$ but also relates to the size $|A_v|$ itself. The score can change a lot across stages as the candidate pool shrinks, and therefore the axioms are unlikely to be preserved in multi-stages.}
This is because, for score-based rules, a voter's satisfaction with regard to a committee (represented by the score of the committee given by the voter) exhibits less consistency across multiple stages.
For example, it is likely that a voter assigns higher scores to the remaining candidates in the subsequent stages as a result of the elimination of the candidate that she initially favors.
By contrast, in an approval-based voting, each voter's satisfaction with a committee solely depends on the selected number of her approved candidates, which does not change over different stages.

While our framework encompasses a wide range of multi-stage voting rules, it is crucial to recognize its limitations. 
Certain procedures, such as Cup and Black's method \cite{narodytska2013manipulating}, fall outside the scope of our framework. 
The Cup rule entails representing candidates as leaf nodes in a binary tree and comparing them pairwise until a unique winner is identified. 
On the other hand, Black's procedure first checks for the presence of a Condorcet winner. 
If such a candidate exists, it was presented as the winner of this election. Otherwise, the winner is determined based on the Borda rule.
Additionally, Nanson's rule \cite{Niou1987note} bears resemblance to our definition of multi-stage, expect that in each stage, candidates are eliminated based on whether their score falls below the average Borda score of all candidates.
Therefore, it is not possible to pre-define the number of remaining candidates at each stage, i.e., the vector $\vec{v}$.
Although these methods are notable in the field of multi-stage voting, they are not covered by our framework.


To summarize, the research presented in this paper establishes a foundation for the theoretical investigation of multi-stage voting and opens up several avenues for future exploration. 
One possible direction involves conducting additional axiomatic studies for other rules in multi-stages, such as Moreno's rules, greedy CC, and similar approaches, which were not covered in this paper. 
Furthermore, an interesting line of inquiry is to explore how the introduction of multi-stages can mitigate various manipulation techniques. 
As noted in Appendix~\ref{sec:emp}, multi-stage voting shows potential fairness advantages, warranting further theoretical analysis on whether multi-stage voting can effectively be employed for debiasing purposes.

\bibliographystyle{splncs04}
\bibliography{mybibliography}

\appendix
\clearpage

\section{Axiomatic Study for Multi-Stage Approval-Based Rules}
\label{sec:approval}


\subsection{Approval-Based Rules and Related Axioms}
\label{sec:approval_based_rules}

When voters are identified with their approval ballots, we refer to the voting method for selecting committees as the approval-based rule.
This appendix focuses on a quite general class of approval-based rules, known as \emph{Thiele methods}, introduced by~\cite{Thiele}.
There has been extensive research on this class of rules and its special cases in the social choice community (e.g., ~\cite{brill2018multiwinner,chamberlin1983representative,sornat2022near}).
The class of Thiele methods consists of all the rules that maximize the sum of voters' individual satisfaction, where each voter $v$'s satisfaction with committee $S$ solely depends on the number of $v$'s approved candidates in $S$.
For each voter $v\in V$, we denote by $A_{v}\subseteq C$ voter $v$'s approval ballot, i.e., the set that consists of the first $|A_v|$ candidates in her preference list.

\begin{definition}[\bf{Thiele Methods, $\omega$-Thiele}]
\label{def: thiele}
    A Thiele method is parameterized by a nondecreasing function $\omega:\mathbb{N}\rightarrow \mathbb{R}$ with $\omega(0) = 0$. The score of a committee $S$ given a set $V$ of voters is defined as
    $score_{\omega}(V,S) = \sum_{v\in V} \omega(|S \cap A_v|)$.
    The $\omega$-Thiele method returns committees with maximum scores.
\end{definition}
\noindent With different functions $\omega$, $\omega$-Thiele can cover a wide range of approval-based rules. 
An example is the most natural rule, \emph{Approval Voting} (AV for short), which selects the $k$ candidates that are approved by most voters.
AV is the $\omega$-Thiele with $\omega(x) = x$.
Another example is \emph{Proportional Approval Voting} (PAV for short), which is the $\omega$-Thiele with $\omega(x) = \sum^x_{j=1}\frac{1}{j}$.
The function $\omega$ depending on the sequence of harmonic numbers captures the property of diminishing returns.
We refer to a rule $\mathcal{R}$ as a multi-stage $\omega$-Thiele if it employs the same $\omega$-Thiele method with the same approval ballots $A_v$'s in each stage.

\paragraph{\textbf{Axioms Related to Approval-Based Rules.}} We present below some axioms that may be desirable for approval-based rules.
First, we consider \emph{Pareto Efficiency} in which a dominated committee must never be output.
When the goal of a multi-winner rule is to select the “best” committee, Pareto Efficiency is often considered to be a minimal requirement.\footnote{Pareto Efficiency can also be defined for score-based rules. However, it is not easy to extend the notion of Pareto Efficiency to multi-stage. Hence, we only consider Pareto Efficiency for approval-based rules in this paper.}
\begin{definition}[\textbf{Pareto Efficiency}]
\label{def:PE}
    A committee $S_1$ dominates a committee $S_2$ if (\romannumeral1) every voter has at least as many approved candidates in $S_1$ as in $S_2$ (for $v\in V$ it holds that $|A_v \cap S_1| \geq |A_v \cap S_2|$), and (\romannumeral2) there is one voter with strictly more approved candidates (there exists $u\in V$ with $|A_u \cap S_1| > |A_u \cap S_2|$).
    An approval-based rule $R$ satisfies Pareto Efficiency if $R$ never outputs dominated committees.
    \end{definition}
\noindent For approval-based rules, the definition of Committee Monotonicity is identical to that for score-based rules (Definition~\ref{def: comm_mono_multi}).
Our next axiom is \emph{Justified Representation}, which is a criterion for accessing whether an approval-based rule can be considered proportional.
The intuition of Justified Representation is that if $k$ candidates are to be selected, then a set of $\frac{n}{k}$ voters that are completely unrepresented can demand that at least one of their all-approved candidates be selected.
\begin{definition}[\textbf{Justified Representation}]
\label{def:JR}
    An approval-based rule $R$ satisfies Justified Representation iff $\forall E=(C, V)$ and $k\in[m]$, for any committee $S\in R(E, k)$, there does not exist a set of voters $V^*\subseteq V$ with $\left\|V^*\right\| \geq \frac{n}{k}$ such that $\bigcap_{v\in V^*} A_v \neq \varnothing$ and $A_v \cap S = \varnothing$ for all $v \in V^*$.
    %
\end{definition}
\noindent Candidate Monotonicity and Consistency should also be considered in approval-based voting when the fair treatment of candidates is necessary. 
\begin{definition}[\bf{Candidate Monotonicity for Approval-Based Rules}]
\label{def:candidate_monotonicity_approval}
An approval-based rule $R$ is candidate monotone iff $\forall E=(C, V)$, $k\in[m]$ and $c\in C$, if $c\in S$ for some $S\in R(A, k)$, then if a voter $v$ additionally approves the candidate $c$ and we obtain an election $E'$, we have $c\in S'$ for some $S'\in R(E', k)$.
\end{definition}
\noindent In Definition~\ref{def:candidate_monotonicity_approval}, the support of a candidate increases by a voter approving the candidate additionally, instead of shifting her one position forward (in Definition~\ref{def:candidate_monotonicity}).
On the other hand, Consistency for approval-based rules is defined in the same way as in Definition~\ref{def:consistency}.

Next, we provide an axiomatic study for approval-based rules.
We focus on the case when single-stage rules satisfy a certain desirable property and investigate whether the property can be preserved in multi-stages that apply one of the single-stage rules in each stage.
\subsection{Preserving Axioms in Multi-Stages}
\label{sec:approval_preserve}
We find the axioms of 
Committee Monotonicity (Theorem~\ref{thm:CM_approval}) and Justified Representation (Theorem~\ref{thm:JR}) preserved in multi-stage approval-based voting.

Committee Monotonicity is a very demanding criterion for Thiele methods.
We show that only the $\omega$-Thiele methods with linear $\omega$-functions satisfy Committee Monotonicity (Lemma~\ref{lem:CM_approval}).
If a multi-stage rule $\mathcal{R}$ is composed of rules that maximize\\$\sum_{v\in V} |A_v \cap S|$ over possible committees $S$ in each stage, it must finally select the candidates approved by most voters after multiple stages.
In fact, the multi-stage rule $\mathcal{R}$ produces a ranking of candidates and thus satisfies Committee Monotonicity by definition.

    \begin{theorem}[\textbf{Committee Monotonicity preserves in a multi-stage approval-based rule}]
    \label{thm:CM_approval}
    Let $t \geq 2$ be an integer and $\mathcal{R} = (R_1, R_2, \dots, R_t)$ be a $t$-stage approval-based rule. If $R_r$ is a single-stage $\omega$-Thiele rule that satisfies Committee Monotonicity for each $r\in[t]$, $\mathcal{R}$ also satisfies Committee Monotonicity.
    \end{theorem}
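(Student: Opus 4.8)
The plan is to reduce the entire multi-stage process to a single global ranking of candidates by approval count, and then read off Committee Monotonicity almost directly from that ranking. First I would invoke Lemma~\ref{lem:CM_approval}: since each $R_r$ is a single-stage $\omega$-Thiele rule satisfying Committee Monotonicity, each $R_r$ must use a \emph{linear} weight function, say $\omega_r(x) = \alpha_r x$ with $\alpha_r \geq 0$ (the degenerate case $\alpha_r = 0$ places no constraint on that stage and can be handled separately, so I assume $\alpha_r > 0$). The crucial consequence of linearity is that for any committee $S$ inside the current pool,
\[
score_{\omega_r}(V, S) = \sum_{v \in V} \alpha_r \, |S \cap A_v| = \alpha_r \sum_{c \in S} \mathrm{app}(c),
\]
where $\mathrm{app}(c) = |\{v \in V : c \in A_v\}|$ is the \emph{global} approval count of $c$, a quantity independent of the stage since the ballots $A_v$ are fixed across stages. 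Hence maximizing the $\omega_r$-score over size-$k_r$ subsets of the surviving pool $S_{r-1}$ is exactly choosing the $k_r$ candidates of $S_{r-1}$ with the largest approval counts, and the particular value $\alpha_r>0$ is irrelevant to the argmax.

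Next I would make ``largest approval counts'' precise via a threshold characterization: call $S$ a \emph{top-$\ell$} committee of a pool $D$ if $|S| = \ell$ and $\min_{c \in S}\mathrm{app}(c) \geq \max_{c \in D \setminus S}\mathrm{app}(c)$. Then $R_r(E_r,k_r)$ is precisely the collection of top-$k_r$ committees of $S_{r-1}$, with ties among equal-approval candidates broken in all possible ways. The key structural step is a collapse lemma, proved by induction on stages: every committee $S_t$ produced along a valid trajectory $C = S_0 \supset S_1 \supset \dots \supset S_t$ is a top-$k_t$ committee of $C$, and conversely every top-$k_t$ committee of $C$ is realizable as some such $S_t$. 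The forward direction uses that a top-$k_r$ committee of a top-$k_{r-1}$ committee is again top-$k_r$ of $C$: passing to a subset can only raise the minimum approval inside the committee, while every discarded candidate's approval stays below the previous threshold, hence below the new minimum. The backward direction builds a trajectory by padding a top-$k_t$ committee $T$ with the next-highest-approval candidates to reach sizes $k_{t-1}, \dots, k_1$; since those padding candidates lie in $C\setminus T$ their approvals are $\le \min_{c\in T}\mathrm{app}(c)$, so each padded set is itself a top committee and $T$ is the top-$k_t$ block inside it. This yields $\mathcal{R}(E, \vec v) = \{\text{top-}k_t\text{ committees of }C\}$, depending only on $k_t$ and not on the intermediate sizes $k_1,\dots,k_{t-1}$.

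Finally I would verify both clauses of Definition~\ref{def: comm_mono_multi} directly. Given $\vec v_1, \vec v_2$ with $k^1_t + 1 = k^2_t$, the collapse gives $\mathcal{R}(E,\vec v_1) = \{\text{top-}k^1_t\}$ and $\mathcal{R}(E,\vec v_2) = \{\text{top-}(k^1_t+1)\}$. For clause~(\romannumeral1), any top-$k^1_t$ committee $S$ extends to a top-$(k^1_t+1)$ committee $S'$ by adjoining a highest-approval candidate of $C \setminus S$, so $S \subset S'$; for clause~(\romannumeral2), any top-$(k^1_t+1)$ committee $S$ contains a top-$k^1_t$ committee $S'$ obtained by deleting a lowest-approval candidate of $S$. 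The threshold characterization confirms that $S'$ is again a valid top committee in each case, giving exactly the required containments.

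The main obstacle I anticipate is the careful bookkeeping of \emph{ties} in tandem with the collection-valued (set-of-committees) semantics: both the collapse lemma and the two containments must hold for the entire collection of tie-broken optima, not merely for one representative. I would lean on the min--max threshold characterization of top-$\ell$ committees throughout, since it makes the tie handling uniform and keeps the inductive step and the two monotonicity verifications clean.
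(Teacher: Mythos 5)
Your proof is correct and follows essentially the same route as the paper's: invoke Lemma~\ref{lem:CM_approval} to force linear weights, observe that every stage then simply selects the candidates with the highest approval counts, and conclude that the multi-stage rule collapses to the single ranking of candidates by approval count, from which Committee Monotonicity is immediate. Your threshold characterization and collapse lemma merely make rigorous (including the tie-handling and the degenerate constant-$\omega$ case) what the paper compresses into the one-line claim that $\mathcal{R}$ ``produces a ranking of candidates and satisfies Committee Monotonicity by definition.''
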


    \noindent
    Before proof of Theorem~\ref{thm:CM_approval}, we would like to first clarify what single-stage committee monotone rules are like.
    The following lemma indicates that only the $\omega$-Thiele methods with linear $\omega$-functions satisfy Committee Monotonicity. 
    \begin{lemma}
    \label{lem:CM_approval}
    If $R$ is a single-stage $\omega$-Thiele rule that satisfies Committee Monotonicity, then 
    $$\omega(i) - \omega(i-1) = \omega(j) - \omega(j-1)$$
    holds for any $i, j \in [1, \infty)$.
    \end{lemma}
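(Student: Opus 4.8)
Looking at this lemma, I need to prove that a single-stage $\omega$-Thiele rule satisfying Committee Monotonicity must have a linear $\omega$-function, meaning the increments $\omega(i)-\omega(i-1)$ are constant.

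Let me think about what Committee Monotonicity means and how to construct a violation when the increments differ.

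The plan is to proceed by contraposition: assuming the increments are not all equal, I would construct an election on which Committee Monotonicity fails. Concretely, suppose there exist indices with $\omega(i)-\omega(i-1) \neq \omega(j)-\omega(j-1)$; without loss of generality take the case where for some $\ell$ the marginal gain $\Delta(\ell) := \omega(\ell)-\omega(\ell-1)$ is strictly larger than $\Delta(\ell')$ for some other $\ell'$. Since $\omega$ is nondecreasing, $\Delta$ takes nonnegative values, and a difference in two increments gives me a lever to make a committee that "should" be extended instead favor a different candidate once the target size grows by one.

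First I would set up a candidate set partitioned into groups approved by disjoint blocks of voters, choosing multiplicities so that at committee size $k$ one committee $S$ is optimal, but at size $k+1$ the unique optimal committee $S'$ does not contain $S$. The key mechanism is the concavity-or-convexity created by unequal increments: when a group already has several of its approved candidates selected, adding one more yields marginal value $\Delta(j)$ at the current count $j$, and by arranging two voter-blocks whose approved candidates are disjoint, I can make it optimal at size $k$ to pack candidates into the block with the larger current marginal gain, yet optimal at size $k+1$ to "rebalance" by dropping one previously-chosen candidate in favor of two in the other block. Because Committee Monotonicity (Definition~\ref{def: comm_mono_multi}, specialized to single-stage with $\vec v_1=(k)$, $\vec v_2=(k+1)$) demands $S \subset S'$ for the extended winner, such a rebalancing that removes a member of $S$ witnesses a violation.

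The main obstacle I expect is making the numerical construction tight: I must choose the number of candidates per block and the number of voters in each block so that the score comparison hinges exactly on the inequality $\Delta(\ell) \neq \Delta(\ell')$, with no ties that would let a committee respecting the subset relation remain optimal (since Committee Monotonicity only requires \emph{some} extending winner to exist, I must force the optimum to be \emph{unique} and non-extending). The cleanest route is likely the minimal case: two approved blocks and a target jump from $k$ to $k+1$ where one block sits at marginal value $\Delta(i)$ and the other at $\Delta(i-1)$ or $\Delta(j)$; balancing voter counts against these two marginal values converts the strict inequality $\omega(i)-\omega(i-1)\neq\omega(j)-\omega(j-1)$ directly into a strict score gap that flips the winning committee. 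Once the strict gap is secured, linearity of $\omega$ follows, since the contrapositive shows any nonlinear $\omega$ (unequal increments at any $i,j$) breaks the axiom.
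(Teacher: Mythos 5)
Your plan matches the paper's proof: the paper likewise argues by contradiction at the smallest index $i_0$ where the increment $p_{i_0}=\omega(i_0)-\omega(i_0-1)$ differs from $p_1$, builds a two-block profile in which the unique size-$i_0$ winner drops a previously selected candidate in favor of two new ones (one from each block), and chooses voter multiplicities as a rational $\tfrac{n_1}{n_2}$ sandwiched between $\tfrac{p_{i_0}}{p_1}$ and $1$ so that the increment inequality becomes a strict score gap. You stop short of writing the explicit profile (the paper uses $n_1\times\{a\}$, $n_2\times\{a,c_1,\dots,c_{i_0-1}\}$ and symmetrically for $b$, plus a second case for $p_{i_0}>p_1$), but your description pins down exactly that construction, so the approach is correct and essentially the same.
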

    \begin{proof}
    Let $p_i$ denote the difference between $\omega(i)$ and $\omega(i-1)$ for each $i \geq 1$.
    Suppose for the sake of contradiction that there exists a nondecreasing function $\omega$ with $p_{i_0} \neq p_{1}$ for some $i_0 > 1$ that parameterizes a rule $R$ satisfying Committee Monotonicity. Without loss of generality, we assume $i_0$ is the smallest integer such that $p_{i_0} \neq p_{1}$. There are two cases:
    \begin{itemize}
        \item $p_{i_0} < p_{1}$. There is a rational number $\frac{n_1}{n_2}$ such that $\frac{p_{i_0}}{p_1} < \frac{n_1}{n_2} < 1$, where $n_1$ and $n_2$ are both integers. Let us consider the voters with approval ballots as below:
        \begin{equation*}
            \begin{split}
            n_1 \times \{a\}\quad n_2 \times \{a, c_1, c_2, \dots, c_{i_0-1}\}\\
            n_1 \times \{b\}\quad n_2 \times \{b, c_1, c_2, \dots, c_{i_0-1}\}. 
            \end{split}
        \end{equation*}
        $R$ chooses $\{c_1, c_2, \dots, c_{i_0-1}\}$ for $k = i_0-1$. However, for $k = i_0$, $R$ chooses $\{a, b, c_{l1}, c_{l2}, \dots, c_{l,i_0-2}\}$ as winning committee, where $c_{l1},...,c_{l,i_0-2}$ are $(i_0-2)$ candidates chosen arbitrarily from $\{c_1,...,c_{i_0-1}\}$. The only winning committee in $R(E, i_0-1)$ is not a subset of $\{a, b, c_{l1}, ..., c_{l,i_0-2}\}$, contradicting Committee Monotonicity of $R$.\\
        
        \item $p_{i_0} > p_{1}$. There is a rational number $\frac{n_1}{n_2}$ such that $\frac{p_{i_0}}{p_1} > \frac{n_1}{n_2} > 1$. Consider an election $E$ with voters described below:
        \begin{align*}
            n_1 \times \{c, d_1, ..., d_{i_0-2}\} \quad n_2 \times \{a, b, d_1, ..., d_{i_0-2}\}.
        \end{align*}
        $R$ chooses $\{c, d_1, ..., d_{i_0-2}\}$ for $k = i_0-1$ but chooses $\{a, b, d_1,..., d_{i_0-2}\}$ for $k = i_0$. There is no committee $S$ in $R(E, i_0)$ such that $\{c, d_1, \dots, d_{i_0-2}\} \subseteq S$, contradicting Committee Monotonicity of $R$.
    \end{itemize}
    \qed
    \end{proof}
    \noindent With Lemma~\ref{lem:CM_approval}, we can now complete proof of Theorem~\ref{thm:CM_approval}.
    \begin{proof}[Proof of Theorem~\ref{thm:CM_approval}]
    For each $r\in [t]$, the score of a committee $S$ in $R_r$ is equivalently defined as $\sum_{v\in V} |A_v\cap S|$ by Lemma~\ref{lem:CM_approval}. 
    In each stage, $R_r$ selects the candidates that are approved by most voters. 
    Then the $t$-stage rule $\mathcal{R}=(R_1, \dots, R_t)$ finally selects $k$ candidates approved by most voters.
    $\mathcal{R}$ actually produces a ranking of candidates and satisfies Committee Monotonicity by definition.
    \qed
    \end{proof}
There are many single-stage rules that fail to satisfy the criterion of Justified Representation, though it is considered to be one of the less stringent definitions of “proportional representation" among several possible options.
An example is AV, mentioned in Section~\ref{sec:approval_based_rules}, which simply selects the candidates approved by most voters~\cite{aziz2017justified}.
Despite this, the good news is that we can simply prove by contradiction that a multi-stage rule satisfies Justified Representation if it is composed of single-stage $\omega$-Thiele methods that satisfy Justified Representation.

    \begin{theorem}[\textbf{Justified Representation preserves in a multi-stage approval-based rule}]
    \label{thm:JR}
    Let $t \geq 1$ be an integer and $\mathcal{R} = (R_1,R_2,...,R_t)$ be a $t$-stage multi-winner approval-based voting rule. If $R_r$ is a single-stage $\omega$-Thiele rule that satisfies Justified Representation for each $r\in[t]$, $\mathcal{R}$ also satisfies Justified Representation.
    \end{theorem}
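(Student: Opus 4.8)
The plan is to argue by contradiction, mirroring the structure of the Solid Coalition proof (Theorem~\ref{thm:solid coalitions}) but tracking the commonly approved candidates of a coalition across stages. Fix an election $E=(C,V)$ and a vector $\vec{v}=(k_1,\dots,k_t)$, and suppose $\mathcal{R}$ violates Justified Representation: there is a winning committee $S=S_t\in\mathcal{R}(E,\vec{v})$, certified by a chain $C=S_0\supseteq S_1\supseteq\cdots\supseteq S_t=S$ with $S_r\in R_r((S_{r-1},V),k_r)$, together with a coalition $V^*\subseteq V$ with $\|V^*\|\ge \tfrac{n}{k_t}$, common approved set $D:=\bigcap_{v\in V^*}A_v\neq\varnothing$, and $A_v\cap S_t=\varnothing$ for every $v\in V^*$. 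First I would record two monotonicity facts that come for free from the nesting of the committees: the surviving approved set $A_v\cap S_r$ of each voter only shrinks as $r$ grows, so once the whole coalition is unrepresented it stays unrepresented; and the surviving common set $D\cap S_r$ decreases from $D\cap S_0=D\neq\varnothing$ to $D\cap S_t=\varnothing$. I would also note the crucial size bookkeeping: since $k_1>\cdots>k_t$ we have $\tfrac{n}{k_r}\le\tfrac{n}{k_t}\le\|V^*\|$ for every stage $r$, so $V^*$ always meets the coalition-size threshold required to invoke Justified Representation at any stage.

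The core of the argument is then to locate the stage $r^*$ at which the last commonly approved candidate is eliminated, i.e.\ the unique $r^*$ with $D\cap S_{r^*-1}\neq\varnothing$ and $D\cap S_{r^*}=\varnothing$ (which exists by the second monotonicity fact). In the sub-election $(S_{r^*-1},V)$ the restriction of each ballot is $A_v\cap S_{r^*-1}$, so $\bigcap_{v\in V^*}(A_v\cap S_{r^*-1})=D\cap S_{r^*-1}\neq\varnothing$: the coalition is still cohesive at stage $r^*$, and by the size bookkeeping above it has size at least $\tfrac{n}{k_{r^*}}$. If I can additionally certify that $V^*$ is \emph{already} completely unrepresented by $S_{r^*}$, that is $A_v\cap S_{r^*}=\varnothing$ for all $v\in V^*$, then $V^*$ is a cohesive, large, wholly unrepresented coalition in $(S_{r^*-1},V)$, directly contradicting the assumption that $R_{r^*}$ satisfies Justified Representation, and the proof closes.

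The hard part will be exactly this last certification, namely aligning the stage that removes the final common candidate with the stage at which the coalition loses all representation; the two need not coincide a priori. After $D\cap S_{r^*}=\varnothing$ a single voter $v\in V^*$ could still be represented by a candidate $x\in A_v\cap S_{r^*}$ that the rest of the coalition does not approve; such an $x$ keeps $V^*$ represented without keeping it cohesive, and if $x$ is dropped only at a later stage then at that later stage $V^*$ is no longer cohesive and Justified Representation of $R_{r^*+1},\dots,R_t$ can no longer be invoked for the whole coalition. I would attack this obstacle in one of two ways: either (i) show that it cannot arise, using that each $R_r$ is an $\omega$-Thiele rule satisfying Justified Representation so that the coalition's common candidate is protected until its total representation drops simultaneously; or (ii) pass to the cohesive subgroup $V_x=\{v\in V^*: x\in A_v\}$ represented by $x$ and recurse, where the difficulty becomes guaranteeing that this subgroup still meets the larger threshold $\tfrac{n}{k_r}$ at the later stage where $x$ is eliminated. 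A cleaner alternative is induction on $t$: writing $\mathcal{R}=(R_1,(R_2,\dots,R_t))$, the case $D\cap S_1\neq\varnothing$ follows immediately from the inductive hypothesis applied to the $(t-1)$-stage rule on $(S_1,V)$, so that only the case $D\cap S_1=\varnothing$, the same cohesiveness-loss phenomenon, needs the extra work.
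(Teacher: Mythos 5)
Your proposal follows the same route as the paper's own proof: assume a violating cohesive coalition $V^*$ for the final committee $S_t$, note that its size $\left\|V^*\right\|\geq \frac{n}{k_t}$ exceeds every threshold $\frac{n}{k_r}$, locate the stage $r_0$ at which the last commonly approved candidate of $V^*$ is eliminated from the nested chain $S_0\supseteq S_1\supseteq\cdots\supseteq S_t$, and try to contradict Justified Representation of $R_{r_0}$ there. Up to that point your bookkeeping (monotone shrinkage of $A_v\cap S_r$ and of $D\cap S_r$, preservation of cohesiveness until stage $r_0$) matches the paper exactly.

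However, what you have written is a sketch, not a proof: you explicitly leave open the step you correctly identify as the crux, namely certifying that $V^*$ is \emph{wholly} unrepresented by $S_{r_0}$. That certification is genuinely needed, since Definition~\ref{def:JR} only forbids a large cohesive coalition \emph{all} of whose members have empty intersection with the output, and it does not follow from the setup: a voter $v\in V^*$ may retain an approved candidate $x\in A_v\cap S_{r_0}$ lying outside $D$, in which case $R_{r_0}$'s Justified Representation is not violated, while at every later stage $V^*$ is no longer cohesive, so Justified Representation of $R_{r_0+1},\dots,R_t$ cannot be invoked for it either. Indeed, Justified Representation of $R_{r_0}$ applied to the still-cohesive, still-large coalition $V^*$ actually \emph{guarantees} that some member of $V^*$ is represented by $S_{r_0}$, so the contradiction cannot be extracted at stage $r_0$ in the form you (and the paper) aim for. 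Your proposed repairs do not obviously close this: recursing on the subgroup $V_x$ loses the size guarantee against the larger thresholds $\frac{n}{k_r}$ at later stages, and the induction on $t$ reduces to exactly the same unresolved case $D\cap S_1=\varnothing$. For what it is worth, the paper's proof takes precisely this route and at the critical point simply asserts that at least $\frac{n}{k_{r_0}}$ voters are unrepresented in the $r_0$-th stage without deriving it; you have put your finger on the one step that the published argument also fails to justify.
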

    \begin{proof}
    Fix an election $E=(C, V)$ and a $\Vec{v} = (k_1, ...,k_{t})$. 
    Let $S$ be one of the winning committees in the output of $\mathcal{R}=(R_1,\dots,R_t)$ on $(E, \Vec{v})$. 
    Suppose for the sake of contradiction that $R_r$ satisfies Justified Representation for each $r\in[t]$, but 
    there exists a set $V^*\subseteq V$ with $\left\|V^*\right\| \geq \lceil\frac{n}{k}\rceil$ such that $\bigcap_{v\in V^*}A_v \neq \varnothing$ and $\left(\bigcup_{v\in V^*}A_v\right) \cap S = \varnothing$.

    For each $r\in[t]$, let $S_r$ be the output of $R_r$.
    Since $\left(\bigcup_{v\in V^*}A_v\right) \cap S = \varnothing$ and $R_t$ satisfies Justified Representation by assumption, $\left(\bigcap_{v\in V^*}A_v\right) \cap S_{t-1}$ must be $\varnothing$.
    We have $S = S_t\subset\cdots\subset S_1\subset S_0 = C$, $\left(\bigcap_{v\in V^*}A_v\right) \cap S_{t-1} = \varnothing$ and $\left(\bigcap_{v\in V^*}A_v\right) \neq \varnothing$. Let $r_0 \in [1, t)$ be the last stage such that $\left(\bigcap_{v\in V^*}A_v\right) \cap S_{r_0-1} \neq \varnothing$, i.e., $\left(\bigcap_{v\in V^*}A_v\right) \cap S_{r_0} = \varnothing$. 
    This means the candidates in $\left(\bigcap_{v\in V^*}A_v\right) \cap S_{r_0-1}$ are not included in the winning set $S_{r_0}$. There are $\left\|V\right\|\geq \frac{n}{k_{r_0}}$ voters unrepresented in the $r_0$-th stage, which contradicts Justified Representation of $R_{r_0}$.
    \qed
    \end{proof}

\subsection{Violating Axioms in Multi-Stages}
\label{sec:approval_violate}

    We discuss in this part the axioms of Candidate Monotonicity (Theorem~\ref{thm:candidate_monotone_approval}), Consistency (Theorem~\ref{thm:consistency_approval}), and Pareto Efficiency (Theorem~\ref{thm:PE}).
    %
    It is known that all of the single-stage increasing Thiele methods satisfy Candidate Monotonicity, Consistency, and Pareto Efficiency (e.g., see~\cite{lackner2020multi}).
    %
    However, a multi-stage approval-based rule may not satisfy them even though it is composed of single-stage Thiele methods.

    \begin{theorem}[\textbf{Candidate Monotonicity does not preserve in a multi-stage approval-based rule}]
    \label{thm:candidate_monotone_approval}
    Let $t\geq 2$ be an integer and $\mathcal{R}$ be a $t$-stage $\omega$-Thiele rule, where $R$ is a single-stage $\omega$-Thiele rule that satisfies Candidate Monotonicity.
    If for any integer $i \geq 0$, $\omega(i)$ is a rational number  and
    $$0\neq\omega(i_0-1) - \omega(i_0-2) \neq \omega(i_0) - \omega(i_0-1)$$
    holds for some $i_0$ in the function $\omega$, the multi-stage rule $\mathcal{R}$ is not candidate monotone.
    \end{theorem}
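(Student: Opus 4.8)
The plan is to construct an explicit two-stage counter-example, in the spirit of the score-based proofs above; the reduction from $t$ stages to two is handled by the same padding trick, appending $t-2$ dummy candidates approved by no voter and enlarging $\vec{v}$ so that the first $t-2$ stages merely discard these zero-value dummies and leave the two-stage instance untouched. Write $p_i := \omega(i)-\omega(i-1)$ as in the proof of Lemma~\ref{lem:CM_approval}, so the hypothesis reads $p_{i_0-1} \neq 0$ and $\delta := p_{i_0}-p_{i_0-1} \neq 0$. To isolate this single non-linear step I would install a block of $i_0-2$ \emph{filler} candidates $D=\{d_1,\dots,d_{i_0-2}\}$ approved by \emph{every} voter, so that the fillers are forced into every winning committee of both stages. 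I take $C = D\cup\{a,b,c\}$ (hence $m=i_0+1$) with $\vec{v}=(i_0,\,i_0-1)$, so that the first stage keeps the fillers plus two of the active candidates $a,b,c$ and the second keeps the fillers plus a single active candidate; here $a$ plays the role of the candidate $c$ whose support is increased in Definition~\ref{def:candidate_monotonicity_approval}.

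Because every voter approves all fillers, a committee $\{x,y\}\cup D$ has first-stage score $n\,\omega(i_0-2)+p_{i_0-1}(n_x+n_y)+\delta\,n_{xy}$, where $n_x$ counts the voters approving active $x$ and $n_{xy}$ those approving both $x$ and $y$; likewise $\{x\}\cup D$ has second-stage score $n\,\omega(i_0-2)+p_{i_0-1}n_x$, so the final stage simply ranks the active candidates by approval count. I would choose the counts with $n_c>n_a>n_b$ and $n_c\ge n_a+2$, and force the three pairs $\{a,b\},\{a,c\},\{b,c\}$ to tie in the first stage; the two balance equations $\delta(n_{ab}-n_{ac})=p_{i_0-1}(n_c-n_b)$ and $\delta(n_{ab}-n_{bc})=p_{i_0-1}(n_c-n_a)$ are solvable precisely because $p_{i_0-1},\delta\in\mathbb{Q}$ (rationality of $\omega$), and clearing denominators yields an integer-valued, realizable profile. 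Under it the first stage may keep $\{a,b\}\cup D$, whereupon $n_a>n_b$ makes $a$ the final winner, so $D\cup\{a\}\in\mathcal{R}(E,\vec{v})$.

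Now let a single voter additionally approve $a$. When $\delta<0$ I pick a voter whose active approvals are exactly $\{b\}$, so the modification raises $n_a$ and $n_{ab}$ each by one; when $\delta>0$ I pick a voter with active approvals exactly $\{c\}$, raising $n_a$ and $n_{ac}$. Writing $T$ for the common first-stage pair score beforehand, direct substitution gives the first-stage scores $T+p_{i_0}$ for $\{a,b\}$, $T+p_{i_0-1}$ for $\{a,c\}$, $T$ for $\{b,c\}$ when $\delta<0$, and $T+p_{i_0}$ for $\{a,c\}$, $T+p_{i_0-1}$ for $\{a,b\}$, $T$ for $\{b,c\}$ when $\delta>0$. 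A one-line comparison using $p_{i_0-1}>0$ and the sign of $\delta$ shows that $\{a,c\}$ is the strict first-stage maximizer in both cases, so the first stage now uniquely yields $\{a,c\}\cup D$; and since $n_c\ge n_a+2$ still gives $n_c>n_a+1$ after the modification, the second stage strictly prefers $c$ to $a$. Thus $\mathcal{R}(E',\vec{v})=\{D\cup\{c\}\}$ contains no committee with $a$, violating Candidate Monotonicity although $a$ gained support.

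The main obstacle is the joint calibration of the second paragraph: the profile must simultaneously (i) lock the fillers into both stages irrespective of the modification, (ii) tie all three active pairs before the modification while maintaining $n_c>n_a>n_b$, and (iii) ensure the lone extra approval of $a$ both forces the surviving pair to be $\{a,c\}$ and still leaves $c$ ahead of the boosted $a$ in the final stage. Checking that a realizable integer profile meeting (i)--(iii) exists for \emph{both} signs of $\delta$ — the approval-based analogue of the case split on $\gamma^{m,k}(1)$ versus $\gamma^{m,k}(2)$ in Theorem~\ref{thm: l1l1_cadi_mono} — is where the real work lies, although rationality of $\omega$ renders each individual balancing step routine.
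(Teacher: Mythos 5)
Your construction is correct and follows essentially the same route as the paper's proof: a two-stage core padded with dummies, $i_0-2$ filler candidates approved by (essentially) every voter so as to localize the non-linearity of $\omega$ at the step from $i_0-1$ to $i_0$, an engineered first-stage tie realized with integer voter counts via the rationality of $\omega$, and a case split on the sign of $p_{i_0}-p_{i_0-1}$ that determines which voter's extra approval of the boosted candidate flips the surviving pair to $\{a,c\}$, whereupon $c$ wins the final stage. The paper's profile is just a more explicit, symmetric instance of your scheme (it takes $n_a=n_b$ and $n_{ab}=0$ and boosts one of the two tied candidates, so only the single ratio $n_1/n_2=p_{i_0-1}/p_{i_0}$ needs calibrating), and the realizability checks you defer do go through.
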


    \noindent
    Candidate Monotonicity requires that an increase in support for some elected candidate by changing a voter's preference should not result in their knock-out.
    The reason why this axiom does not preserve is that changing some votes in favor of an elected candidate can change the outcome of the election for other candidates.
    While the candidate receiving more support may still be elected by a single-stage rule, it is possible for the new members of this stage's winning committee to surpass them in the subsequent stages.
    
    \begin{proof}[Proof of Theorem~\ref{thm:candidate_monotone_approval}]
    For the function $\omega$ parameterizing $R$, let $p_i$ denote the difference between $\omega(i)$ and $\omega(i-1)$ for each $i \geq 1$.
    There are two cases: $p_{i_0} < p_{i_0-1}$ and $p_{i_0} > p_{i_0-1}$. Our plan is to construct an election $E=(C, V)$ and a vector $\Vec{v} = (k_1,k_2,\dots,k_t)$ for each case to show that $\mathcal{R}$ does not satisfy Candidate Monotonicity.
    \begin{itemize}
        \item $p_{i_0} < p_{i_0-1}$. We can always find a rational number $\frac{n_1}{n_2}$ such that $n_1$ and $n_2$ are both integers, and $\frac{n_1}{n_2} = \frac{p_{i_0-1}}{p_{i_0}}$.
        Without loss of generality, we assume that $n_1 - n_2 > 1$.
        Otherwise, we can use $c\cdot n_1$ and $c\cdot n_2$ for some integer $c$ instead.
        Then consider the following set of voters $V$:
        \begin{equation*}
        \begin{split}
            & n_1 \times \{a, c, d_1, \dots, d_{i_0-2}\}\quad n_1 \times \{b, c, d_1, \dots, d_{i_0-2}\}\\
            & n_2 \times \{a, d_1, \dots, d_{i_0-2}\}\quad n_2 \times \{b, d_1, \dots, d_{i_0-2}\}\\
            & 1 \times \{e_{1}\}\quad 1 \times \{e_{2}\}\quad \cdots \quad 1 \times \{e_{t-2}\}.
        \end{split}
        \end{equation*}
        We eliminate one candidate in each stage. In particular, $\mathcal{R}$ selects $i_0$ candidates from $(i_0+1)$ ones in the $(t-1)$-th stage, and picks $(i_0-1)$ candidates from $i_0$ ones in the last stage. 
        
        It is straightforward that candidate $e_1,...,e_{t-3}$ and $e_{t-2}$ are eliminated in the first $(t-2)$ stages. In the $(t-1)$-th stage, $\{a,b, d_1, ..., d_{i_0-2}\}$ 
        is one of the winning committees. Its score is equal to that of $\{a,c, d_1, ..., d_{i_0-2}\}$ or $\{b, c, d_1, ..., d_{i_0-2}\}$, as $n_1p_{i_0} = n_2p_{i_0-1}$.
        %
        Then there must be a winning committee $S$ in the last stage such that $b\in S$.
        %
        However, if a voter who has approved $\{a, d_1, \dots, d_{i_0-2}\}$ additionally approves $b$, then $\{b, c, d_1, ..., d_{i_0-2}\}$ becomes the only winning committee in the $(t-1)$-th stage, which outperforms $\{a,b, d_1, ..., d_{i_0-2}\}$.
        Hence the final winning committee is $\{c, d_1, \dots, d_{i_0-2}\}$, which does not include $b$.
        
        \item $p_{i_0} > p_{i_0-1}$. There exists a rational number $\frac{n_1}{n_2}<1$ such that $n_1$ and $n_2$ are both integers and $\frac{n_1}{n_2} = \frac{p_{i_0-1}}{p_{i_0}}$. Without loss of generality, we assume that $n_2 - n_1 > 1$.
        Then consider voters with the following approval ballots:
        \begin{equation*}
        \begin{split}
            & n_1 \times \{a, b, d_1, \dots, d_{i_0-2}\}\quad n_2 \times \{c, d_1, \dots, d_{i_0-2}\}\\
            & 1 \times \{e_{1}\}\quad 1 \times \{e_{2}\}\quad \cdots \quad 1 \times \{e_{t-2}\}.
        \end{split}
        \end{equation*}
        We still consider an election that eliminates one candidate in each stage. Candidates $e_1,...,e_{t-2}$ are eliminated in the first $(t-2)$ stages. In the $(t-1)$-th stage, $\{a,b, d_1, ..., d_{i_0-2}\}$ is one of the winning committees.
        %
        Then there is a winning committee $S$ in the last stage such that $b\in S$. However, if a voter who has approved $\{c, d_1, \dots, d_{i_0-2}\}$ additionally approves $b$, then $\{b, c, d_1, ..., d_{i_0-2}\}$ becomes the only winning committee in the $(t-1)$th-stage, which outperforms $\{a,b, d_1, ..., d_{i_0-2}\}$. 
        Hence the final winning committee is $\{c, d_1, \dots, d_{i_0-2}\}$ as $n_2 > n_1+1$. The only winning committee does not include $b$.
    \end{itemize}
    \qed
    \end{proof}

    \begin{theorem}[\textbf{Consistency does not preserve in a multi-stage approval-based rule}]
    \label{thm:consistency_approval}
    Let $t\geq 2$ be an integer and $\mathcal{R}$ be a $t$-stage $\omega$-Thiele rule, where $R$ is a single-stage $\omega$-Thiele rule that satisfies Consistency. If 
    $$\omega(i_0-1) - \omega(i_0-2) \neq \omega(i_0) - \omega(i_0-1)$$
    holds for some $i_0$ in the function $\omega$, the multi-stage rule $\mathcal{R}$ does not satisfy Consistency.
    \end{theorem}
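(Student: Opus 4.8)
The plan is to build an explicit counter-example along the same lines as the score-based Consistency argument (Theorem~\ref{thm: l1l1_consistency}), but now driven by the \emph{single} point of non-linearity of $\omega$ at $i_0$. Writing $p_i := \omega(i) - \omega(i-1)$ for the marginal gains, the hypothesis is precisely $p_{i_0-1} \neq p_{i_0}$, and as in the proof of Theorem~\ref{thm:candidate_monotone_approval} I would split into the two cases $p_{i_0} < p_{i_0-1}$ and $p_{i_0} > p_{i_0-1}$, the second being obtained from the first by exchanging the roles of the high- and low-marginal candidates. In each case I would construct two electorates $V_1$ and $V_2$ over a common candidate set $C$ together with one stage-vector $\vec{v}$ that removes a single candidate per stage, padding with singleton ballots $\{e_1\},\dots,\{e_{t-2}\}$ exactly as in Theorem~\ref{thm:candidate_monotone_approval} so that the first $t-2$ stages trivially discard the $e_j$ and the whole argument reduces to a two-stage comparison around committee sizes $i_0-1$ and $i_0$. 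Here $\vec{v}$ plays the role of $k$ in Definition~\ref{def:consistency}, which is legitimate since a multi-stage rule is itself a multi-winner rule. The filler candidates $d_1,\dots,d_{i_0-2}$, approved by every voter, serve to lift the effective committee sizes into the regime where only the increments $p_{i_0-1}$ and $p_{i_0}$ govern the decisive marginal contributions.

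The heart of the construction is to arrange, as in Theorem~\ref{thm: l1l1_consistency}, that $V_1$ and $V_2$ each elect the same final candidate $a$, so that $\mathcal{R}(E_1,\vec{v}) \cap \mathcal{R}(E_2,\vec{v}) = \{\{a\}\}$ is nonempty, while $V_1$ discards one filler candidate in the penultimate stage and $V_2$ discards a different one, and the \emph{combined} electorate $V_1+V_2$ discards a third candidate $c$ there because the aggregated approval counts cross a different threshold. The key lever is that once $c$ is removed from the pool in the union, each relevant voter's intersection with the competing committees $\{a,\dots\}$ and $\{b,\dots\}$ changes by exactly one, so the final-stage score difference $\mathrm{score}(b)-\mathrm{score}(a)$ trades a $p_{i_0}$ term for a $p_{i_0-1}$ term (or vice versa); since $p_{i_0-1}\neq p_{i_0}$, this difference changes sign between the separate and the combined electorates. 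Thus $\mathcal{R}(E_1+E_2,\vec{v}) = \{\{b\}\}$, disjoint from the intersection $\{\{a\}\}$, contradicting Definition~\ref{def:consistency}. Conceptually this is exactly the non-linear-order phenomenon flagged before Theorem~\ref{thm: l1l1_consistency} and by Skowron~\cite{skowron2019axiomatic}: greedy elimination induces no weak linear order over committees, so which candidate is discarded early, and hence who wins, is sensitive to merging electorates.

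The main obstacle will be the bookkeeping that isolates the single pair of increments $p_{i_0-1}, p_{i_0}$ in the decisive comparison while neutralising every other increment of the otherwise-arbitrary $\omega$. I must design the ballots so that at the moment of comparison every voter's overlap with each competing committee lands in $\{i_0-1, i_0\}$, and so that all voters outside the distinguishing groups contribute identically to $a$ and $b$ and therefore cancel. Verifying that the penultimate-stage elimination behaves as claimed in all three electorates ($V_1$, $V_2$, and $V_1+V_2$), in particular that the discarded candidate flips from a filler in the separate electorates to $c$ in the union, is the delicate step; once the ballots are fixed, the final-stage inequality is immediate from $p_{i_0-1}\neq p_{i_0}$. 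Setting up the exact penultimate-stage ties that keep $a$ a valid winner in $V_1$ and $V_2$ may require choosing integer multiplicities $n_1,n_2$ with $n_1 p_{i_0}=n_2 p_{i_0-1}$, as in Theorem~\ref{thm:candidate_monotone_approval}; if one prefers to avoid any rationality assumption on $\omega$, I would instead engineer strict separations (making $a$ the unique winner separately and $b$ the unique winner in the union), which suffices to violate Consistency just as well.
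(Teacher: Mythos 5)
Your plan follows essentially the same route as the paper's proof: pad with singleton ballots $\{e_1\},\dots,\{e_{t-2}\}$ so that only the last two stages matter, choose integer multiplicities tuned to the ratio $p_{i_0-1}/p_{i_0}$ so that the penultimate-stage elimination differs between the separate electorates $V_1$, $V_2$ and their union, and conclude that the merged electorate cannot return the common winning committee. One detail to repair: a candidate approved by \emph{every} voter always has maximal marginal score and so can never be the one discarded in the penultimate stage, so the decoys eliminated there must be distinct from your universally-approved fillers $d_1,\dots,d_{i_0-2}$ (the paper gives such candidates only partial support, e.g.\ the $c_i$'s in its first case); with that adjustment the bookkeeping you defer does go through.
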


    \noindent
    The reason why Consistency does not inherit is similar to that of Candidate Monotonicity.
    Increasing support for certain candidates by adding a set of voters in favor of them not only affects the candidates themselves with support but also alters the outcome for other candidates.
    Although the candidates receiving additional support may still be elected at first, it is possible for the new members of the winning committee in this stage to knock out them in the subsequent stages.
    Consistency therefore cannot preserve in multi-stage voting.
    
    \begin{proof}[Proof of Theorem~\ref{thm:consistency_approval}]
    For the function $\omega$ parameterizing $R$, let $p_i$ denote the difference between $\omega(i)$ and $\omega(i-1)$ for each $i \geq 1$.
    Without loss of generality, we assume $i_0$ is the smallest integer such that $p_{i_0} \neq p_{1}$. 
    There are two cases: $p_{i_0} < p_{1}$ and $p_{i_0} > p_{1}$. We plan to construct an election $E=(C, V)$ and a vector $\vec{v} = (k_1,k_2,\dots,k_t)$ for each case to show that $\mathcal{R}$ does not satisfy Consistency.
    \begin{itemize}
        \item $p_{i_0} < p_{1}$. 
        Consider an election that eliminates one candidate in each stage, i.e., $\Vec{v} = (i_0+t-2, i_0+t-3, ..., i_0, i_0-1)$.
        There is a rational number $\frac{n_1}{n_2}$ such that $\frac{p_{i_0}}{p_1} < \frac{n_1}{n_2} < 1$ and $n_1 > 2$. Let us consider the set of voters $V_1$ with approval ballots as below:
        \begin{equation*}
            \begin{split}
            & n_1 \times \{a\}\quad n_2 \times \{b, c_1, c_2, \dots, c_{i_0-1}\}\\
            & 1 \times \{e_{1}\}\quad 1 \times \{e_{2}\}\quad \cdots \quad 1 \times \{e_{t-2}\}.
            \end{split}
        \end{equation*}
        It is obvious that candidates $e_1,...,e_{t-2}$ are eliminated in the first $(t-2)$ stages. In the $(t-1)$-th stage, $\mathcal{R}$ may choose $\{a, c_{1}, \dots, c_{i_0-1}\}$ or $\{a, b, c_{l1}, c_{l2}, \dots, c_{l,i_0-2}\}$ as winning committee, where $c_{l1}, ..., c_{l, i_0-2}$ are arbitrary $(i_0-2)$ candidates chosen from $c_1, ..., c_{i_0-1}$. In the last stage, $a$ will be eliminated due to that $n_2 > n_1$. Hence the final winning committee may be $\{c_{1}, \dots, c_{i_0-1}\}$ or $\{b, c_{l1}, c_{l2}, \dots, c_{l,i_0-2}\}$.

        We construct the set of voters $V_2$ as below:
        \begin{equation*}
            \begin{split}
            & n_1 \times \{b\}\quad n_2 \times \{a, c_1, c_2, \dots, c_{i_0-1}\}\\
            & 1 \times \{e_{1}\}\quad 1 \times \{e_{2}\}\quad \cdots \quad 1 \times \{e_{t-2}\}. 
            \end{split}
        \end{equation*}
        Similarly, candidates $e_1,...,e_{t-2}$ are eliminated in the first $(t-2)$ stages. In the $(t-1)$-th stage, $\mathcal{R}$ may choose $\{b, c_{1}, \dots, c_{i_0-1}\}$ or $\{a, b, c_{l1}, c_{l2}, \dots, c_{l,i_0-2}\}$ as winning committee, where $c_{l1}, ..., c_{l, i_0-2}$ are arbitrary $(i_0-2)$ candidates chosen from $c_1, ..., c_{i_0-1}$. In the last stage, $b$ will be eliminated due to that $n_2 > n_1$. Hence the final winning committee may be $\{c_{1}, \dots, c_{i_0-1}\}$ or $\{a, c_{l1}, c_{l2}, \dots, c_{l,i_0-2}\}$. Therefore $\mathcal{R}((C,V_1), \vec{v}) \cap \mathcal{R}((C,V_2), \vec{v}) = \left\{\{c_{1}, \dots, c_{i_0-1}\}\right\}$.

        However, if we combine $V_1$ and $V_2$, $\mathcal{R}$ will choose $\{a, b, c_{l1}, \dots, c_{l,i_0-2}\}$ as winning committee in the $(t-1)$-th stage since $p_1n_1 > p_{i_0}n_2$. In the last stage, committee $\{c_{1}, \dots, c_{i_0-1}\}$ must not be in $\mathcal{R}((C,V_1+V_2), i_0-1)$, and thus $\mathcal{R}$ does not satisfy Consistency.

        \item $p_{i_0} > p_1$. Let $\Vec{v} = (i_0+t-2,i_0+t-3,...,i_0, i_0-1)$, i.e., one candidate is eliminated in each stage.
        There is a rational number $\frac{n_1}{n_2}$ such that $\frac{p_{i_0}}{p_1} > \frac{n_1}{n_2} > 1$ and $n_2 > 2$. Consider the set of voters $V_1$ with approval ballots described below:
        \begin{equation*}
            \begin{split}
            & n_1 \times \{b, d_1, \dots, d_{i_0-2}\} \quad n_1\times \{c, d_1, \dots, d_{i_0-2}\}\\
            & n_2 \times \{a, b, d_1, \dots, d_{i_0-2}\}\\
            & n_2 \times \{a, c, d_1, \dots, d_{i_0-2}\}\\
            & 1 \times \{e_{1}\}\quad 1 \times \{e_{2}\}\quad \cdots \quad 1 \times \{e_{t-2}\}. 
            \end{split}
        \end{equation*}
        After candidates $e_1,e_2,...,e_{t-2}$ are eliminated in the first $(t-2)$ stages, $\mathcal{R}$ chooses $\{a, b, d_1, \dots, d_{i_0-2}\}$ or $\{a, c, d_1, \dots, d_{i_0-2}\}$ in the $(t-1)$-th stage. Then $a$ will be eliminated in the last stage as $n_2 < n_1$. Hence the final winning committee is $\{b, d_1, \dots, d_{i_0-2}\}$ or $\{c, d_1, \dots, d_{i_0-2}\}$.

        We construct a set of voters $V_2$ as below:
        \begin{equation*}
            \begin{split}
            & n_1 \times \{a, d_1, \dots, d_{i_0-2}\} \quad n_1\times \{c, d_1, \dots, d_{i_0-2}\}\\
            & n_2 \times \{a, b, d_1, \dots, d_{i_0-2}\}\\
            & n_2 \times \{b, c, d_1, \dots, d_{i_0-2}\}\\
            & 1 \times \{e_{1}\}\quad 1 \times \{e_{2}\}\quad \cdots \quad 1 \times \{e_{t-2}\}. 
            \end{split}
        \end{equation*}
        $\mathcal{R}$ chooses $\{a, b, d_{1}, \dots, d_{i_0-2}\}$ or $\{b, c, d_1, \dots, c_{i_0-2}\}$ in the $(t-1)$-th stage. In the last stage, $b$ will be eliminated, and thus the final winning committee is  $\{a, d_{1}, \dots, d_{i_0-2}\}$ or $\{c, d_1, \dots, d_{i_0-2}\}$. Hence $\mathcal{R}((C,V_1), \vec{v}) \cap \mathcal{R}((C,V_2), \Vec{v}) = \left\{\{c, d_{1}, \dots, d_{i_0-2}\}\right\}$.

        However, if we combine $V_1$ and $V_2$, $\mathcal{R}$ will choose $\{a, b, d_{1}, \dots, d_{i_0-2}\}$ as winning committee in the $(t-1)$-th stage since $p_{i_0}n_2 > p_1n_1$. In the last stage, committee $\{c, d_{1}, \dots, d_{i_0-2}\}$ must not be in $\mathcal{R}((C,V_1+V_2), \vec{v})$, and thus $\mathcal{R}$ does not satisfy Consistency.
    \end{itemize}
    \qed
    \end{proof}

\begin{theorem}[\textbf{Pareto Efficiency does not preserve in a multi-stage approval-based rule}]
    \label{thm:PE}
    Let $t\geq 2$ be an integer and $\mathcal{R}$ be a $t$-stage $\omega$-Thiele.
    If
    \begin{equation}
    \label{eq:pareto}
      \omega(i_0-1) - \omega(i_0-2) > \omega(i_0) - \omega(i_0-1) 
    \end{equation}
    holds for some $i_0$ in the function $\omega$, 
    then $\mathcal{R}$ does not satisfy Pareto Efficiency.
    \end{theorem}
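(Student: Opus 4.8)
Before turning to the construction I would isolate the one structural fact that governs the whole argument. To refute Pareto Efficiency it suffices to produce an election $E=(C,V)$ and a vector $\vec{v}$ such that the output collection $\mathcal{R}(E,\vec{v})$ contains a committee $S$ that is count-dominated (in the sense of Definition~\ref{def:PE}) by \emph{some} committee $S'$ of size $k_t$; crucially $S'$ need not lie on any elimination trajectory. Writing $p_i:=\omega(i)-\omega(i-1)$, the hypothesis is $p_{i_0-1}>p_{i_0}$. The key observation is that if $S'$ dominates $S$ then, since $\omega$ is nondecreasing, $score_\omega(V,S')\ge score_\omega(V,S)$, with strict inequality whenever $\omega$ strictly increases at the arguments involved. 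Consequently the rule can never be \emph{made} to output $S$ at a stage where $S'$ is still available: the elimination of at least one candidate of $S'$ must occur at an earlier, larger-committee stage. This is exactly where concavity enters — at a large committee a candidate whose supporters already hold $i_0-1$ approved members contributes only the shallow marginal $p_{i_0}$, so it can be pruned in favour of candidates delivering the steep marginal $p_{i_0-1}$.

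Following the template of Theorems~\ref{thm:candidate_monotone_approval} and~\ref{thm:consistency_approval}, I would pin every voter's baseline count to $i_0-2$ with fillers $f_1,\dots,f_{i_0-2}$ approved by all, absorb the surplus stages with singleton dummies $e_1,\dots,e_{t-2}$ (eliminated in the first $t-2$ stages), and reduce the problem to a fixed gadget of ``main'' candidates $x_1,x_2,y_1,y_2$, a redundant-cover candidate $w$, and crossing voter groups. The intended mechanism is three-tiered: at the first non-dummy stage the candidate $w$ keeps a distinguished voter $v^\ast$ covered, so that (by concavity) the supporters of $y_1,y_2$ are already at count $i_0-1$ and these two candidates deliver only the shallow marginal $p_{i_0}$, forcing their strict elimination; at the next stage $w$ itself is pruned in favour of the ``spread'' candidates $x_1,x_2$ (which supply the steep marginal $p_{i_0-1}$ to their own groups), thereby uncovering $v^\ast$. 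The final committee $S=\{x_1,x_2,f_1,\dots,f_{i_0-2}\}$ is then count-dominated by $S'=\{y_1,y_2,f_1,\dots,f_{i_0-2}\}$, which recovers $v^\ast$. Because every comparison here is a strict $\omega$-score inequality, the construction needs no rationality of $\omega$ and produces a deterministic winner at each stage.

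The subtle point, and the step I expect to be the main obstacle, is \emph{why the dominating committee must differ from $S$ in at least two candidates}. A single swap $S'=(S\setminus\{x\})\cup\{y\}$ with $\mathrm{supp}(x)\subsetneq\mathrm{supp}(y)$ cannot arise for strictly increasing $\omega$: at the stage $r$ where $y$ is dropped, $x$ (which survives to $S$) is still present, so $(S_r\setminus\{x\})\cup\{y\}\subseteq S_{r-1}$ would have strictly larger score than $S_r$, contradicting that $S_r$ is a stage winner. Hence $S'$ and $S$ must have genuinely \emph{crossing} supports — no candidate of $S'$ has its support containing the support of a candidate of $S$ — so that domination is achieved only in aggregate (one extra covered voter $v^\ast$) while each candidate of $S'$ is individually redundant at the large committee. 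The real work is to choose the group multiplicities so that \emph{simultaneously} (i) the per-voter domination inequalities hold for every group (a unanimity condition, hence fragile) and (ii) the stage-by-stage $\omega$-comparisons strictly eliminate $y_1,y_2$ (and then $w$) early while carrying $x_1,x_2$ to the end; the tension between these two demands is precisely what $p_{i_0-1}>p_{i_0}$ resolves. I would verify the two bundles of inequalities separately, and treat the degenerate flat case $p_{i_0}=0$ (where a single swap already suffices, since the swap above then ties rather than strictly improves) as a short special case, mirroring the case splits in the preceding theorems.
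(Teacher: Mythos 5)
Your overall architecture matches the paper's: two non-trivial stages, fillers $f_1,\dots,f_{i_0-2}$ to pin every relevant voter's baseline count near $i_0$, singleton dummies to absorb the first $t-2$ stages, and a dominating committee that differs from the output in more than one candidate with ``crossing'' supports. Your opening observation --- that the rule cannot output $S$ at a stage where a dominating $S'$ of the same size is still available, because count-domination plus monotone $\omega$ forces $score_\omega(V,S')\ge score_\omega(V,S)$ --- is also exactly right, and it correctly rules out a single swap. The problem is that this same observation, applied one stage earlier, defeats your own gadget. You want the penultimate stage to output $W=\{x_1,x_2,w\}\cup F$ from the pool $\{x_1,x_2,y_1,y_2,w\}\cup F$ with $y_1,y_2$ ``strictly eliminated,'' and the final output $S=\{x_1,x_2\}\cup F$ to be dominated by $S'=\{y_1,y_2\}\cup F$. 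But that final domination requires $|A_v\cap\{y_1,y_2\}|\ge|A_v\cap\{x_1,x_2\}|$ for every voter (strict for $v^*$), and adding the common candidates $w$ and $F$ to both sides shows that $T=\{y_1,y_2,w\}\cup F$ count-dominates $W$ voter by voter. Since $T$ has the same size as $W$ and lies in the penultimate pool, $score_\omega(V,T)\ge score_\omega(V,W)$, so $W$ can never strictly win there; it can at best tie, and only if $\omega$ is flat at $v^*$'s count transition --- impossible for any strictly increasing $\omega$ satisfying the hypothesis, such as PAV. No choice of group multiplicities can repair this, because the inheritance of domination from the pair $(S,S')$ to the pair $(W,T)$ is profile-independent.

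The paper's construction evades precisely this trap: the realized output is $\{b,c,d\}\cup F$ (or $\{b,c,e\}\cup F$), dominated by $S'=\{a,d,e\}\cup F$, and $S'$ loses its members across two \emph{different} stages --- $a$ is eliminated at the penultimate stage while $d,e\in S'$ are still in the pool, and only at the last stage is one of $d,e$ dropped. Because $S'$ shares a candidate with the penultimate winner that is not in the final output, there is no stage at which a same-size superset of $S'$ competes head-to-head with the realized winner, so the lifting argument does not apply and only genuinely different score comparisons (controlled by $p_{i_0-1}>p_{i_0}$ via the multiplicities $n_1,n_2,n_3$) remain. To fix your proof you would need to redesign the gadget so that the dominating committee straddles the last two elimination events in this way, and then actually carry out ``the real work'' you defer: your proposal never specifies the multiplicities or verifies the stage-by-stage inequalities, which is where the content of the theorem lies.
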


     \noindent Note that if we consider multi-stage voting with vector $(m-1, m-2, m-3,\dots,k)$, we indeed reverse sequential Thiele rules. It is well-known that reverse sequential proportional approval voting does not satisfy Pareto Efficiency~\cite{lackner2020multi}. Theorem~\ref{thm:PE} generalizes the negative result to any $\omega$-Thiele rule which has some $i_0$ satisfying Ineq.~\ref{eq:pareto}.
    \begin{proof}[Proof of Theorem~\ref{thm:PE}]
        For the function $\omega$ parameterizing $\mathcal{R}$, let $p_i$ denote the difference between $\omega(i)$ and $\omega(i-1)$ for each $i \geq 1$.
        Our plan is to construct an election $E=(C, V)$ and a vector $\Vec{v} = (k_1,k_2,\dots,k_t)$ to show that $\mathcal{R}$ does not satisfy Pareto Efficiency.
        Consider an election that eliminates one candidate in each stage, i.e., $\Vec{v} = (i_0+t, i_0+t-1, ..., i_0+2, i_0+1)$.
        There exist integers $n_1, n_2, n_3 > 1$ satisfying $\frac{n_1}{n_2} \leq \frac{p_{i_0-1}-p_{i_0}}{p_{i_0}}$ and $\frac{n_3}{n_2} \geq \frac{p_{i_0-1}}{p_{i_0+1}}$. Let us consider the set of voters $V$ with approval ballots as below:
        \begin{equation*}
        \label{eq:profile_pareto_1}
        \begin{split}
            & n_1 \times \{a, f_1, ..., f_{i_0-2}\}\quad n_2 \times \{a, b, f_1, ..., f_{i_0-2}\} \quad n_2 \times \{a, c, f_1, ..., f_{i_0-2}\} \\ 
            & n_3 \times \{b,d,e, f_1, ..., f_{i_0-2}\}\quad  n_3 \times \{c,d,e, f_1, ..., f_{i_0-2}\}\\
            & 1 \times \{g_{1}\}\quad 1 \times \{g_{2}\}\quad \cdots \quad 1 \times \{g_{t-2}\}.
        \end{split}
        \end{equation*}
        It is straightforward that candidate $g_1,...,g_{t-3}$ and $g_{t-2}$ are eliminated in the first $(t-2)$ stages. It can be verified that in the $(t-1)$-th stage, one of the winning committees (of size $i_0+2$) is $\{b,c,d,e,f_1, ..., f_{i_0-2}\}$.
        In the last stage, we will choose either $\{b,c,d,f_1, ..., f_{i_0-2}\}$ or $\{b,c,e,f_1, ..., f_{i_0-2}\}$ as these have maximal score among all $(i_0+1)$-subsets of $\{b,c,d,e,f_1, ..., f_{i_0-2}\}$.  However, these two subsets are both dominated by $\{a,d,e,f_1,...,f_{i_0-2}\}$, so Pareto Efficiency fails.
        \qed


        %
    \end{proof}

\section{Empirical results}
\label{sec:emp}
We present a comparative experiment on single-stage and two-stage voting in this section, aiming to gain insight into the fairness of election outcomes.
We conduct simulations on synthetic data and employ a variety of commonly used score-based voting rules.
The empirical findings indicate that two-stage voting may result in a fairer selection of winning committees when compared to the single-stage method.
%
%
%
%
%

\subsection{Setup}
\paragraph{Sampling candidates and voters.}
We generate 200 random points on the plane $\mathbb{R}^2$ to represent candidates, where $80$ of them come from a Gaussian distribution centered at $(1,0)$ with a standard deviation of $0.5$, and the other $120$ ones are distributed uniformly in the square $[-2,1]\times[-2,1]$.
In addition, we uniformly sample 400 points on a disc centered at $(0,0)$ with a radius of $2$ to represent voters.
%
A voter's preference order is determined by the Euclidean distance between the voter and each candidate.
That is,
given a pair of candidates $c_i, c_j\in \mathbb{R}^2$ and a voter $v\in \mathbb{R}^2$, $v$ prefer $c_i$ to $c_j$ if $d(c_i,v)< d(c_j,v)$, where $d(\cdot, \cdot )$ stands for the Euclidean distance.

\paragraph{Target committee sizes.} For an election with two stages, we set the target committee size $k_2$ in the second stage as $20$.
%
%
Then the size $k_1$ of the winning committee for the first stage ranges between $20$ and $200$. 
We run a separate election for each value of $k_1$.
%
In particular, two-stage voting degenerates into single-stage voting when $k_1 = 20$ or $k_1 = 200$.
%
%
\paragraph{Voting rules.}
We apply the following score-based voting rules in our experiment: SNTV, Borda, Bloc, and CC.
The definitions of these rules can be found in Section \ref{sec:model}.
We implement the CC rule through integer linear programming (ILP) solving. 

\paragraph{The measure of fairness.}
%
We use the Gini index to evaluate the fairness of voting.
%
If $n_i$ represents the number of final winners in the $i$-th quadrant, then the Gini index $G$ is given by 
$G = \left(\sum^{4}_{i=1} \sum^{4}_{j=1} |n_i-n_j|\right)/\left(2 \sum^4_{i=1} \sum^{4}_{j=1}n_i\right)$.
%
The lower Gini index means a fairer election.

\subsection{Results}
Figure~\ref{fig_score_sntv} illustrates the mean and standard deviation of the scores of the winning committees, as determined through $500$ random trial elections applying SNTV in each stage.
Similarly, Figure~\ref{fig_gini_sntv} presents the mean and standard deviation of the Gini indices obtained from the same $500$ random trials using the SNTV rule.
%
%
%
The results of the remaining rules can be found in the appendix.
%
\begin{figure}[ht]
\centering
\includegraphics[height=6cm,width=8cm]{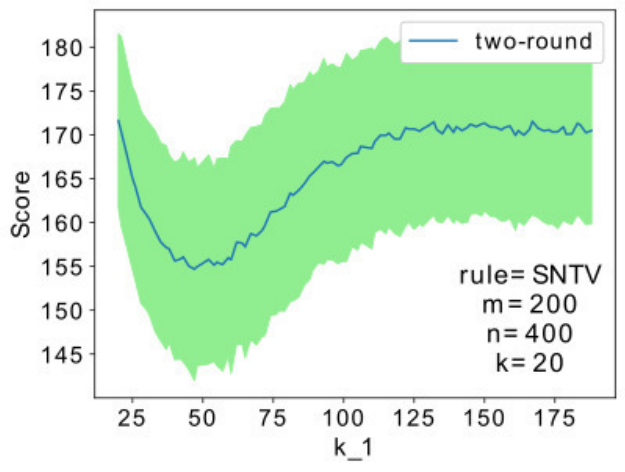}
\caption{Score of the winning committee under two-stage voting.
The blue line represents the score, and the green shade represents the standard deviation of the score.}
\label{fig_score_sntv}
\end{figure}

\paragraph{The winners produced by single-stage and two-stage voting can be distinct.}
As shown in Figure~\ref{fig_score_sntv}, the score curve displays a trend of initial decrease followed by an increase.
%
It is intuitive to infer that the score of the winning committee obtained through the two-stage voting process will be approximately equivalent to that obtained by a single-stage election as the value of $k_1$ approaches $k_2$ or $m$.
%
This is because, in such cases, the two-stage election degenerates into a single-stage one that chooses $20$ winners from $200$ candidates.
%
In particular, it can be observed that the score of the winning committee obtained through a two-stage voting process with a $k_1$ astage $50$ is lower than that obtained by a single-stage election on average.
This discrepancy in scores illustrates that the winners produced by single-stage and two-stage voting are different.
%
%
%


\begin{figure}[ht]
\centering
\includegraphics[height=6cm,width=8cm]{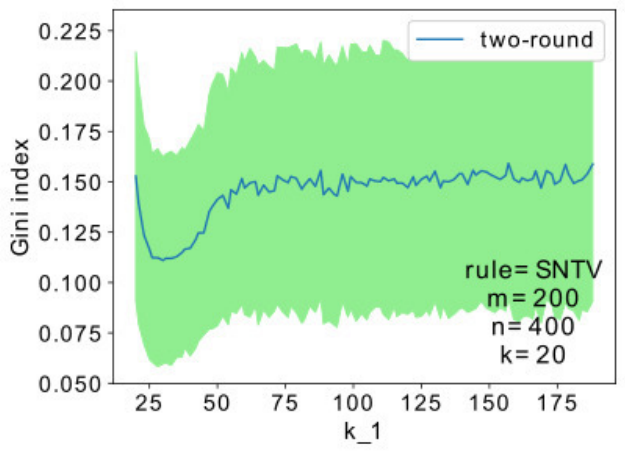}
\caption{Gini index of the winning committee under two-stage voting.
The blue line represents the Gini index, and the green shade represents the standard deviation of the Gini index.}
\label{fig_gini_sntv}
\end{figure}

\paragraph{Two-stage voting may be fairer than single-stage voting.}
%
%
Figure~\ref{fig_gini_sntv} shows that the Gini index of a two-stage voting process, with a value of $k_1$ approximately equal to $30$, is lower than that of a single-stage one on average.
This observation suggests that two-stage voting using SNTV can generate fairer winning committees than single-stage voting.

\end{document}